\documentclass[letterpaper,11pt,onecolumn]{article}

\usepackage[]{epsf,epsfig, amsmath,amssymb,amsthm, latexsym, color,graphicx, xspace, url}

\usepackage{times,euscript, balance}

\usepackage{xspace}

\setlength{\textwidth}{6.5in}
\setlength{\evensidemargin}{0.0in}
\setlength{\oddsidemargin}{0.0in}
\setlength{\textheight}{8.5in}
\setlength{\topmargin}{-0.25in}
\setlength{\parskip}{2mm}
\setlength{\baselineskip}{1.7 \baselineskip}


\newtheorem{theorem}{Theorem}[section]

\newtheorem{claim}[theorem]{Claim}
\newtheorem{proposition}[theorem]{Proposition}

\newtheorem{corollary}[theorem]{Corollary}
\newtheorem{definition}[theorem]{Definition}

\def\eps{{\varepsilon}}

\def\A{\EuScript{A}}

\def\B{\EuScript{B}}

\def\D{\EuScript{D}}

\def\F{\EuScript{F}}

\def\M{\EuScript{M}}
\def\P{\EuScript{P}}

\def\S{\EuScript{S}}

\def\etal{\textit{et~al.}}

\def\disc{{\rm{disc}}}
\def\reals{{\mathbb R}}

\newdimen\instindent
\def\institute#1{\gdef\@institute{#1}}

 \newfont{\affaddr}{phvr at 11pt}
 \newfont{\affaddrit}{phvro at 11pt} 

\usepackage{graphicx,floatflt,psfrag}



\begin{document}

\begin{titlepage}
\title{A Size-Sensitive Discrepancy Bound for Set Systems of Bounded Primal Shatter Dimension\thanks{
    Work on this paper has been supported by NSF under grant CCF-12-16689.
  }}

\author{Esther Ezra\thanks{%
    Courant Institute of Mathematical Sciences,
    New York University, New York, NY 10012, USA;
    \textsl{esther@courant.nyu.edu}.
  } 
}

\maketitle

\begin{abstract}
  Let $(X,\S)$ be a set system on an $n$-point set $X$.
  The \emph{discrepancy} of $\S$ is defined as the minimum of the largest deviation 
  from an even split, over all subsets of $S \in \S$ and two-colorings $\chi$ on $X$.
  We consider the scenario where, for any subset 
  $X' \subseteq X$ of size $m \le n$ and for any parameter $1 \le k \le m$, the number of restrictions of the sets
  of $\S$ to $X'$ of size at most $k$ is only $O(m^{d_1} k^{d-d_1})$, for fixed integers $d > 0$ and $1 \le d_1 \le d$
  (this generalizes the standard notion of \emph{bounded primal shatter dimension} when $d_1 = d$).
  In this case we show that there exists a coloring $\chi$ with discrepancy bound 
  $O^{*}(|S|^{1/2 - d_1/(2d)} n^{(d_1 - 1)/(2d)})$,
  for each $S \in \S$, where $O^{*}(\cdot)$ hides a polylogarithmic factor in $n$. 
  This bound is tight up to a polylogarithmic factor~\cite{Mat-95, Mat-99} and the corresponding coloring $\chi$ can be computed in 
  expected polynomial time using the very recent machinery of Lovett and Meka for constructive discrepancy minimization~\cite{LM-12}. 
  Our bound improves and generalizes the bounds obtained from the machinery of Har-Peled and Sharir~\cite{HS-11} (and the follow-up work in~\cite{SZ-12}) 
  for points and halfspaces in $d$-space for $d \ge 3$. 
\end{abstract}
\end{titlepage}


\vspace{-2ex}
\section{Introduction}
\label{sec:intro}
\vspace{-2ex}


Let $(X, \S)$ be a finite set system 
with $n = |X|$. 
A \emph{two-coloring} of $X$ is a mapping $\chi : X \rightarrow \{-1 ,+1\}$.
For a subset $S \in \S$ we define $\chi(S) := |\sum_{x \in S} \chi(x)|$.
The \emph{discrepancy} of $\S$ is then defined as
$$
\disc(\S) := \min_{\chi} \max_{S \in \S} \chi(S)  .
$$
In other words, the discrepancy of the set system $(X,\S)$ is the minimum over all colorings $\chi$ of the 
largest deviation from an even split, over all subsets in $\S$.

Our goal in this paper is to derive discrepancy bounds for $(X,\S)$ in the scenario where $(X,\S)$ admits 
a polynomially bounded \emph{primal shatter function} and has some additional favorable properties.
In the bounds that we derive the discrepancy for each $S \in \S$ is sensitive to its cardinality $|S|$.
Let us first recall the definition of set systems of this kind:

\begin{definition}[Primal Shatter Function; Matou\v{s}ek~\cite{Mat-99}]
  The \emph{primal shatter function} of a set system $(X,\S)$ is a function, denoted by $\pi_{\S}$, whose value
  at $m$ is defined by
  $$
  \pi_{\S}(m) = \max_{Y \subseteq X, |Y| = m} |\S |_{Y}| ,
  $$
  where $\S |_{Y}$ is the collection of all sets in $\S$ projected onto (that is, restricted to) $Y$. 
  In other words, $\pi_{\S}(m)$ is the maximum possible number of distinct intersections of the sets of $\S$
  with an $m$-point subset of $X$.
\end{definition}

From now on we say that a set system $(X,\S)$ with $|X| = n$ (where $n$ can be assumed to be arbitrarily large) 
has a \emph{primal shatter dimension $d$} if $\pi_{\S}(m) \le C m^d$, for all $m \le n$, where $d > 1$ and $C > 0$ are constants.

A typical family of set systems that arise in geometry with bounded primal shatter dimension consists of set systems $(X,\S)$
of points in some low-dimensional space ${\reals}^d$, and $\S$ is a collection of certain simply-shaped regions, e.g., 
halfspaces, balls, or simplices (where $d > 0$ is assumed to be a constant). In such cases, the primal shatter function is $m^{O(d)}$; 
see, e.g.,~\cite{Har-Peled-11} for more details. In fact, set systems of this kind are part of a more general family, 
referred to as set systems of \emph{finite VC-dimension}~\cite{VC-71}; the reader is referred to~\cite{Har-Peled-11, HW87} for the exact definition.
Although the ``official'' definition of finite VC-dimension is different, it suffices to have the same requirement as for set systems of 
polynomially bounded primal shatter function. It is also known 
that the VC-dimension is finite if and only if the primal shatter dimension is finite, although they do not necessarily have the same 
value, see, e.g.,~\cite{Har-Peled-11} for more details. From now on we make only the assumption about having a finite primal shatter 
dimension, in particular, this is the case in our construction and analysis, and the VC-dimension is mentioned here only for the sake 
of completeness of the presentation. 


A major result by Matou\v{s}ek~\cite{Mat-95} (see also~\cite{Mat-99, MWW-93}) is the following:

\begin{theorem}[Matou\v{s}ek~\cite{Mat-99}]
  \label{the:primal_shatter}
  Let $(X,\S)$ be a set system as above with $|X| = n$, $\pi_{\S}(m) \le C m^d$, for all $m \le n$, where $d > 1$ and $C > 0$ 
  are constants. Then 
  $$
  \disc(\S) = O(n^{1/2 - 1/(2d)}) ,
  $$ 
  where the constant of proportionality depends on $d$ and $C$.
\end{theorem}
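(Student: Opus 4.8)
The plan is to prove the theorem by the \emph{partial coloring (entropy) method}, applied iteratively. The engine is the standard partial-coloring lemma: given a set system $(Y,\F)$ with $|Y|=m$ and any half-widths $\{\Delta_F\ge\sqrt{|F|}\}_{F\in\F}$ satisfying the entropy budget $\sum_{F\in\F} g\!\left(\Delta_F/\sqrt{|F|}\right)\le m/10$, where $g$ is the usual entropy function with $g(\lambda)=O\!\left(e^{-\lambda^{2}/9}\right)$ for $\lambda\ge 1$, there is a partial coloring $\chi\colon Y\to\{-1,0,+1\}$ leaving at most $m/2$ points uncolored and with $|\chi(F)|\le\Delta_F$ for every $F\in\F$; this is the usual pigeonhole/entropy count over the $2^{m}$ colorings. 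I would run this in $O(\log n)$ rounds: put $X_0=X$; in round $i$ the active set $X_i$ has $m_i=|X_i|$ points, the relevant restricted system $\S|_{X_i}$ has at most $\pi_\S(m_i)\le Cm_i^{d}$ sets, and one partial-coloring step colors half of $X_i$ and leaves $m_{i+1}\le m_i/2$ points for the next round. Gluing the per-round colors into a single $\chi$ on $X$, each $S\in\S$ satisfies $\chi(S)\le\sum_i|\chi_i(S\cap X_i)|$, so it remains to bound this sum.

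The crude instantiation --- uniform half-widths, forced up to order $\sqrt{m_i\log m_i}$ by the presence of up to $\Theta(m_i^{d})$ sets to control in round $i$ --- telescopes only to $O(\sqrt{n\log n})$. The point of the theorem is to be far more frugal with the entropy budget, which is possible exactly because a set system of primal shatter dimension $d$ is much more restricted than an arbitrary system with $n^{d}$ sets: the bound $\pi_\S(m)\le Cm^{d}$ holds at \emph{every} scale $m\le n$. Concretely I would make the half-widths \emph{size-sensitive}: split $\S|_{X_i}$ into size classes $\F_{i,j}=\{F:2^{j-1}\le|F|<2^{j}\}$, use the shatter bound at the scale $2^{j}$ of each class to bound how many sets need each half-width, assign to $F\in\F_{i,j}$ a half-width of order $\sqrt{|F|}$ times a scale-dependent factor, and distribute the budget $m_i/10$ over the $O(\log m_i)$ classes. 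A computation with these exponents shows that $1/2-1/(2d)$ is precisely the value for which the per-round discrepancy of the restricted system comes out as $O\!\left(m_i^{1/2-1/(2d)}\right)$ up to lower-order factors, so the geometric-type series $\sum_i|\chi_i(S\cap X_i)|$ converges, and for every $S$ one gets $\chi(S)=O(n^{1/2-1/(2d)})$, with the constant depending only on $d$ and $C$.

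The step I expect to be the main obstacle is the contribution of the \emph{large} sets --- those with $|F|$ comparable to $m_i$ --- because one partial-coloring round cannot push their discrepancy below $\Theta(\sqrt{m_i\log m_i})$ (there genuinely can be $\Theta(m_i^{d})$ near-halving sets, e.g.\ halfspaces of a well-spread point set), and if these are accumulated naively they already exceed the target. Getting past this is where the argument has to become genuinely multi-scale: rather than controlling a large set directly, one should produce a hierarchical decomposition of $X$ --- a purely combinatorial surrogate for a simplicial partition / cutting, built by iterating $\eps$-net and partial-coloring steps --- in which every $F\in\S$ is a signed combination of a controlled number of ``canonical'' pieces, each small relative to its scale (hence, by $\pi_\S(m)\le Cm^{d}$, few in number at each scale and cheap to control), whose small discrepancies combine with cancellation rather than additively. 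Designing this decomposition and tracking how the number of canonical pieces depends on $d$ so that the final bound is $O(n^{1/2-1/(2d)})$ is the technical heart of the proof; the exponent is best possible, by the matching lower bounds of~\cite{Mat-95, Mat-99}.
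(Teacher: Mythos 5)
Your overall strategy---iterated partial coloring driven by the entropy method, with non-uniform half-widths organized by scale---is exactly the skeleton of Matou\v{s}ek's proof (and of the construction in Section~\ref{sec:construction} of this paper, specialized to $d_1=d$), and you have correctly located the obstruction: there can be $\Theta(m_i^{d})$ near-halving sets, so no single-scale assignment beats $\sqrt{m_i\log m_i}$. But the proposal stops precisely where the proof happens; you write that designing the multi-scale decomposition ``is the technical heart of the proof'' without supplying it, so what you have is a plan, not a proof. Moreover, your intermediate step is off target: classifying $\S|_{X_i}$ by set \emph{size} and invoking the shatter bound ``at the scale $2^{j}$ of each class'' does not go through, because $\pi_{\S}(m)\le Cm^{d}$ bounds the number of distinct traces on an $m$-point ground set, not the number of sets of a given cardinality---controlling the latter is exactly the \emph{additional} hypothesis this paper introduces for its size-sensitive refinement, and it is not available for the plain Matou\v{s}ek bound. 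The correct multiscale parameter is the symmetric-difference separation $\delta$, not $|F|$.

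The two missing ingredients are concrete. First, Haussler's Packing Lemma (Theorem~\ref{thm:packing}): any $\delta$-separated subfamily of a system of primal shatter dimension $d$ has $O((n/\delta)^{d})$ members. Second, the chaining decomposition it enables: take maximal $(n/2^{j})$-packings $\F_j$ for $j=0,\dots,\log n$, attach to each $F_j\in\F_j$ its nearest neighbour $F_{j-1}\in\F_{j-1}$ (which satisfies $|F_j\triangle F_{j-1}|\le n/2^{j-1}$ by maximality), and write every $S\in\S$ as an alternating disjoint-union/difference of the canonical pieces $A_j=F_j\setminus F_{j-1}$ and $B_j=F_{j-1}\setminus F_j$, as in~(\ref{eq:decomposition1}). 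At level $j$ there are only $O(2^{jd})$ such pieces, each of size at most $n/2^{j-1}$; assigning them the half-width $\Delta_j=A(1+|j-j_0|)^{-2}\,n^{1/2-1/(2d)}$ with $j_0\approx(\log n)/d-O(\log\log n)$ makes the entropy sum peak at $j_0$, decay superexponentially for $j>j_0$ and geometrically for $j<j_0$, and keeps $\sum_j\Delta_j=O(n^{1/2-1/(2d)})$. Without the packing lemma and this explicit weight schedule the exponent $1/2-1/(2d)$ cannot be extracted; with them, the rest of your outline (iterating $O(\log n)$ partial-coloring rounds on the uncolored points) is fine. Note also that the $\eps$-net appears only inside the proof of the packing lemma, not as a step in building the decomposition as you suggest.
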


This bound is known to be tight in the worst case (see~\cite{Mat-99} and the references therein for more details).

\vspace{-2ex}
\paragraph{Relative $(\eps, \delta)$-approximations and $\eps$-nets.}

The motivation to establish sensitive discrepancy bounds of the above kind stems from their application in the construction of
\emph{relative $(\eps,\delta)$-approximations}, introduced by Har-Peled and Sharir~\cite{HS-11}\footnote{In~\cite{HS-11} they were
  introduced, with a slighly different notation, as relative $(p,\eps)$-approximations.}
based on the work of Li~\etal~\cite{LLS-01} in the context of machine learning theory. We recall the definition from~\cite{HS-11}:
For a set system $(X,\S)$ (with $X$ finite), the \emph{measure} of a set $S \in \S$ is the quantity
$\overline{X}(S) = \frac{|S \cap X|}{|X|}$.
Given a set system $(X ,\S)$ and two parameters, $0 < \eps <1$ and $0 < \delta < 1$,
we say that a subset $Z \subseteq X$ is a \emph{relative $(\eps,\delta)$-approximation} if it satisfies,
for each set $S \in \S$,
$$
\overline{X}(S)(1-\delta) \le \overline{Z}(S) \le \overline{X}(S)(1+\delta) ,
\quad \mbox{if $\overline{X}(S) \ge \eps$,} \quad \mbox{and}
$$
$$
\overline{X}(S) -\delta \eps \le \overline{Z}(S) \le \overline{X}(S) + \delta \eps , \quad \mbox{otherwise.}
$$
A strongly related notion is the so-called \emph{$(\nu,\alpha)$-sample}~\cite{Har-Peled-11, Haussler-92, LLS-01}),
in which case the subset $Z  \subseteq X$ satisfies, for each set $S \in \S$,
$$
d_{\nu}(\overline{X}(S), \overline{Z}(S)) := \frac{|\overline{Z}(S) - \overline{X}(S)|}{\overline{Z}(S) + \overline{X}(S) + \nu} < \alpha .
$$
As observed by Har-Peled and Sharir~\cite{HS-11}, relative $(\eps,\delta)$-approximations and $(\nu,\alpha)$-samples
are equivalent with an appropriate relation between $\eps$, $\delta$, and $\nu$, $\alpha$ (roughly speaking, they are equivalent
up to some constant factor). Due to this observation they conclude that the analysis of Li~\etal~\cite{LLS-01} (that shows a bound
on the size of $(\nu,\alpha)$-samples) implies that
for set systems of finite VC-dimension $d$, there exist relative $(\eps,\delta)$-approximations of size 
$\frac{c d\log{(1/\eps)}}{\delta^2\eps}$, where $c > 0$ is an absolute constant. In fact, any random sample of these many elements of
$X$ is a relative $(\eps,\delta)$-approximation with constant probability. 
More specifically, success with probability at least $1-q$ is guaranteed if one samples
$\frac{c (d\log{(1/\eps)} + \log{(1/q)})}{\delta^2 \eps}$ elements of $X$.\footnote{We note that although in the original analysis for this bound 
  $d$ is the VC-dimension, this assumption can be replaced by having just a primal shatter dimension $d$; 
  see, e.g.,~\cite{Har-Peled-11} for the details of the analysis.}

It was also observed in~\cite{HS-11} that \emph{$\eps$-nets} 
arise as a special case of relative $(\eps,\delta)$-approximations.
Specifically, an $\eps$-net is a subset $N\subseteq X$ with the property that any set $S \in \S$ with $|S \cap X|\ge \eps|X|$
contains an element of $N$; in other words, $N$ is a hitting set for all the ``heavy'' sets.
In this case, if we set $\delta$ to be some constant fraction, say, $1/4$, then a relative $(\eps,1/4)$-approximation becomes
an $\eps$-net. Moreover, a random sample of $X$ of size $O\left(\frac{\log{(d/\eps)} + \log{(1/q)}}{\eps}\right)$, with an appropriate 
choice of the constant of proportionality, is an $\eps$-net with probability at least $1-q$; see~\cite{HS-11} for further details.
Our analysis exploits these two structures and the relation between them---see below.

\vspace{-2ex}
\paragraph{Related work.}

There is a rich body of literature in discrepancy theory, with numerous bounds and results. It is beyond the scope of this paper to mention all results, 
and we just list those that are most relevant to our work.
We refer the reader to the book of Chazelle~\cite{Chaz-01} for an overview of discrepancy theory and the book of Matou{\v s}ek~\cite{Mat-99} 
for various results in geometric discrepancy. In particular, our work is based on the techniques overviewed in the latter.

We first briefly overview previous results for an abstract set system on an $n$-point set $X$, with $m = |\S|$ sets. 
The celebrated ``six standard deviations'' result of Spencer~\cite{Spencer-85},
which is an extension to the partial coloring method of Beck~\cite{Beck-81}, implies that for such set systems there exists a coloring $\chi$ such that
$\chi(S) \le K \sqrt{n (1 + \log{(m/n)})}$, for each $S \in \S$, where $K > 0$ is a universal constant. In particular, when $m = n$ we have $K = 6$, 
in which case the discrepancy bound becomes $6 \sqrt{n}$. In contrast, one can easily show that using simple probabilistic considerations, a random 
coloring yields a (suboptimal) discrepancy bound of $\sqrt{n \log{m}}$ (or $\sqrt{n \log{n}}$ if $m = O(n)$). 
A long-standing open problem was whether the result of Spencer~\cite{Spencer-85} can be made constructive, and this has recently been answered 
in the positive by Bansal~\cite{Bansal-10} for the case $m = O(n)$ (for the general case his bound is slightly suboptimal with respect to 
the bound in~\cite{Spencer-85}). 
In a follow-up work, Lovett and Meka~\cite{LM-12} have shown a new elementary constructive proof of Spencer's result, resulting in the same asymptotic 
discrepancy bounds as in~\cite{Spencer-85}, for arbitrary values of $m$. 

In geometric set systems, upper bounds were first shown by Beck, where the Lebesgue measure of a class of geometric shapes is 
approximated by a discrete point set; see the book by Beck and Chen~\cite{BC-87}. 
For arbitrary points sets, Matou\v{s}ek~\etal~\cite{MWW-93} have addressed the 
case of points and halfspaces in $d$-space, and showed an almost tight bound of $O(n^{1/2 - 1/(2d)} \sqrt{\log{n}})$, which has later been
improved to $O(n^{1/2 - 1/(2d)})$~\cite{Mat-95, Mat-99} (Theorem~\ref{the:primal_shatter}).
Concerning lower bounds, there is a rich literature where several such bounds are obtained in geometric set systems. We only mention the lower bound 
$\Omega(n^{1/2 - 1/(2d)})$ of Alexander~\cite{Alexander-90} for set systems of points and halfspaces in $d$-space. For further results we refer the
reader to~\cite{Mat-95} and the references therein.

The extension of discrepancy bounds for points and halfspaces in $d$-space to be size-sensitive has been addressed in the work of Har-Peled 
and Sharir~\cite{HS-11}, who showed a bound of $O(|S|^{1/4} \log{n})$ in the two-dimensional case, using an intricate extension of the technique of 
Welzl~\cite{Welzl-92} (see also~\cite{CW-89}) for constructing spanning trees with low crossing numbers. 
Nevertheless, their technique cannot be applied in higher dimensions, 
because already at $d=3$ they showed a counterexample to their construction. The follow-up work of Sharir and Zaban~\cite{SZ-12} (based on the construction
in~\cite{HS-11}) addresses these cases, establishing the bound $O\left(n^{(d - 2)/2(d - 1)} |S|^{1/(2d(d - 1))} \log^{(d+1)/(2(d - 1))}{n}\right)$. 

\vspace{-2ex}
\paragraph{Our results.}

In this paper we refine the bound in Theorem~\ref{the:primal_shatter} so that it becomes sensitive to the size of the 
sets $S \in \S$ in several favorable cases.
Specifically, we assume that for any (finite) set system projected onto a ground set of 
size $m \le n$ and for any parameter $1 \le k \le m$, the number of sets
of size at most $k$ is only $O(m^{d_1} k^{d-d_1})$, where $d$ is the primal shatter dimension and
$1 \le d_1 \le d$ is an integer parameter\footnote{We ignore the cases where $d_1=0$ or $d_1$ takes fractional values,
  as they do not seem to appear in natural set systems, and, in particular, in the geometric set systems that we consider.}. 
By assumption, when $k = m$ we obtain $O(m^{d})$ sets in total, in accordance with the assumption that the primal shatter dimension is 
$d$, but the above bound is also sensitive to the size $k$ of the set.
 

We show that for set systems of this kind there exists a coloring $\chi$ such that 
$$
\chi(S) = O^{*}(|S|^{1/2 - d_1/(2d)} n^{(d_1 - 1)/(2d)}) , 
$$
where $O^{*}(\cdot)$ hides a polylogarithmic factor in $n$. 
This bound almost matches (up to the polylogarithmic factor) the optimal discrepancy bound in 
Theorem~\ref{the:primal_shatter} when $|S| = n$,
but is more general than this bound as it yields a whole range of bounds for $1 \le d_1 \le d$.
Specifically, when $d_1 = d$, the number of sets is just $O(m^{d})$ (that is, it is not sensitive to their size)
in which case we just have a set system of bounded primal shatter dimension, and then, once again, our discrepancy bound
almost matches the optimal bound in Theorem~\ref{the:primal_shatter}.  
In the other extreme scenario, when $d_1 = 1$, the set system is what we call ``well-behaved'', and our discrepancy
bound then becomes $O^{*}(|S|^{1/2 - 1/2d})$, which depends only on $|S|$ (up to the polylogarithmic factor).
In particular, set systems of points and halfspaces in $d$ dimensions are well-behaved with $d=2$ and $d=3$, respectively.
In the plane, the resulting bound is slightly suboptimal with respect to the sensitive bound of Har-Peled and Sharir~\cite{HS-11}
(our hidden polylogarithmic factor is slightly larger than their $\log{n}$ factor).
For points and halfspaces in three and higher dimensions, our bound considerably improves the bound in~\cite{SZ-12} 
(which extends the construction in~\cite{HS-11}). In particular, the bound in the three-dimensional case in~\cite{SZ-12} 
is not purely sensitive in $|S|$ but also contains an additional sublinear term in $n$ (whereas the original technique
of Har-Peled and Sharir~\cite{HS-11} failed to yield such a bound).


Our construction uses a different machinery than the one in~\cite{HS-11, SZ-12} 
and is a variant of the construction of Matou\v{s}ek~\cite{Mat-99} (see also Matou\v{s}ek~\cite{Mat-95} 
for the special case of points and halfspaces in $d$-space),
based on \emph{Beck's partial coloring} and on the \emph{entropy method}, and is combined with
the property that set systems with bounded primal shatter function admit a small \emph{packing} (see below for details concerning these notions).
Our assumption about the set system (that is, the bound on the number of sets of size at most $k \le m \le n$ stated above)  
enables us to refine the analysis in~\cite{Mat-99} to be size sensitive, which eventually leads to the desired bound.

\vspace{-2ex}
\section{Preliminaries}
\label{sec:prelim}
\vspace{-2ex}

We now briefly review some of the tools mentioned in Section~\ref{sec:intro}, which are applied by our analysis.
Then, we present the construction. 

\vspace{-2ex}
\paragraph{Partial coloring and entropy.}

Let $X$ be a set as above. 
A \emph{partial coloring} of $X$ is any mapping $\chi:X \rightarrow \{ -1, 0, +1 \}$.
A point $x \in X$ with $\chi(x) = 1$ or $\chi(x) = -1$ is said to be 
\emph{colored by $\chi$}, whereas $\chi(x) = 0$ means that $x$ is \emph{uncolored}.

A method originated by Beck~\cite{Beck-81}, which was subsequently elaborated by others and is known by now
as the \emph{entropy method} (see, e.g.,~\cite{AS-00, MS-96, Spencer-85}), shows that under some favorable assumptions 
there exists a substantial partial coloring of $(X,\S)$ with good discrepancy. 
In our analysis we apply the entropy method as a black box, and thus only
present the constructive version of Beck's partial coloring lemma as has very recently been formulated by Lovett and Meka~\cite{LM-12};
see Section~\ref{sec:intro} and the references therein for more details.
We also refer the reader to~\cite{Mat-98} for the original (non-constructive) version of the entropy method.

%

\begin{proposition}[Lovett and Meka~\cite{LM-12}]
  \label{prop:constructive_disc}
  Let $(X,\S)$ be a set system with $|X| = n$, and let $\Delta_S > 0$ be some real number assigned to $S$,
  for each $S \in \S$.
  Suppose that 
  $$
  \sum_{S \in \S} \exp{\left( -\frac{\Delta_S^2}{16 |S|} \right)} \le \frac{n}{16} .
  $$
  Then there exists a partial coloring $\chi:X \rightarrow \{ -1, 0, +1 \}$ that colors at least $n/2$ points of $X$,
  so that $\chi(S) \le \Delta_S + 1/n^c$, for each $S \in \S$, where $c > 0$ is an arbitrarily large constant.
  Moreover, $\chi$ can be computed in expected polynomial time in $n$ and $|\S|$ (where the degree of the polynomial bound depends on $c$).
\end{proposition}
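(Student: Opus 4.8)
The plan is to obtain Proposition~\ref{prop:constructive_disc} as a direct specialization of the constructive partial-coloring (``edge-walk'') theorem of Lovett and Meka~\cite{LM-12}; it is the algorithmic counterpart of Beck's entropy-method partial-coloring lemma, with the entropy quantity specialized to the sum $\sum_{S}\exp(-\Delta_S^2/(16|S|))$. Recall the Lovett--Meka theorem in the following form: given vectors $v_1,\dots,v_m\in\reals^n$, a starting point $x_0\in[-1,1]^n$, a tolerance $\delta>0$, and reals $\lambda_1,\dots,\lambda_m\ge 0$ with $\sum_{j=1}^m\exp(-\lambda_j^2/16)\le n/16$, one can compute, in expected time polynomial in $n$, $m$ and $1/\delta$, a point $x\in[-1,1]^n$ such that (i) $|x_i|\ge 1-\delta$ for at least $n/2$ of the coordinates $i$, and (ii) $|\langle v_j,\,x-x_0\rangle|\le\lambda_j\|v_j\|_2$ for every $j$. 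Rounding the (at least $n/2$) near-extreme coordinates of $x$ to $\pm 1$ and leaving the remaining coordinates uncolored turns $x$ into a partial coloring, and, as in~\cite{LM-12}, the total effect of this rounding on any $\langle v_j,\cdot\rangle$ is at most an additive $1/n^{c}$ once $\delta$ is chosen polynomially small (with the exponent governed by $c$).

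To deduce the proposition, identify $X$ with $\{1,\dots,n\}$, enumerate $\S=\{S_1,\dots,S_m\}$, and apply the above with $x_0=\mathbf{0}$, with $v_j=\mathbf{1}_{S_j}$ the $0/1$ incidence vector of $S_j$ (so that $\|v_j\|_2=|S_j|^{1/2}$), and with $\lambda_j=\Delta_{S_j}/|S_j|^{1/2}$. With these choices the hypothesis $\sum_j\exp(-\lambda_j^2/16)\le n/16$ is literally the assumed inequality $\sum_{S\in\S}\exp(-\Delta_S^2/(16|S|))\le n/16$; the running time is polynomial in $n$ and $|\S|$, the degree depending on $c$; the partial coloring $\chi$ obtained from $x$ colors at least $n/2$ points of $X$; and since $\lambda_j\|v_j\|_2=(\Delta_{S}/|S|^{1/2})\cdot|S|^{1/2}=\Delta_{S}$ for $S=S_j$, conclusion (ii) together with the rounding estimate yields $\chi(S)=\bigl|\sum_{x\in S}\chi(x)\bigr|\le\Delta_S+1/n^{c}$ for every $S\in\S$, as required.

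In this sense the statement is a reformulation of~\cite{LM-12}, and the only step requiring care is the parameter matching carried out above; beyond that there is no obstacle. Were one to want a self-contained proof, the whole difficulty would lie in re-establishing the edge-walk theorem, which I would do following Lovett and Meka: run a discretized Gaussian random walk started at $\mathbf{0}$ inside the polytope obtained by intersecting the cube $[-1,1]^n$ with the slabs $\{y:\ |\langle\mathbf{1}_{S_j},\,y\rangle|\le\lambda_j|S_j|^{1/2}\}$, at each step projecting the Gaussian increment onto the subspace orthogonal to the normals of all currently tight constraints. Conclusion (ii) is then automatic, since the walk never leaves the slabs. The main obstacle is conclusion (i): one must show that after polynomially many steps at least $n/2$ of the cube coordinates have reached $\pm 1$. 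This follows by tracking the potential $\|x\|_2^2$, whose expected increase in one step equals $\gamma^2$ times the dimension of the current free subspace ($\gamma$ being the step size), and by bounding the expected number of simultaneously tight slab constraints by $\sum_j\exp(-\lambda_j^2/16)\le n/16$ through a Gaussian tail bound for each slab; thus, as long as fewer than $n/2$ cube coordinates are frozen, a constant fraction of the $n$ directions remain free, so the potential grows linearly in the number of steps, and comparing this growth with the trivial bound $\|x\|_2^2\le n$ forces at least $n/2$ coordinates to freeze within $O(1/\gamma^2)$ steps with constant probability; independent repetition then makes the whole procedure run in expected polynomial time, and the discretization error is absorbed into $\delta$.
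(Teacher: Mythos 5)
The paper contains no proof of this proposition: it is explicitly invoked as a black box from Lovett and Meka~\cite{LM-12} (``we apply the entropy method as a black box''), so there is no internal argument to compare yours against. Your reduction --- taking $x_0=\mathbf{0}$, $v_j=\mathbf{1}_{S_j}$, $\lambda_j=\Delta_{S_j}/|S_j|^{1/2}$, so that $\sum_j\exp(-\lambda_j^2/16)$ is literally the hypothesized sum and $\lambda_j\|v_j\|_2=\Delta_{S_j}$ --- is the standard and correct way to specialize the edge-walk theorem to this statement, and your sketch of the edge-walk argument itself is faithful to~\cite{LM-12}.

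The one point you pass over too quickly is the conversion of the fractional vector $x\in[-1,1]^n$ into a genuine map $\chi:X\to\{-1,0,+1\}$. Rounding the coordinates with $|x_i|\ge 1-\delta$ to $\pm1$ indeed costs only $n\delta\le 1/n^c$, but \emph{zeroing} the remaining coordinates changes $\langle\mathbf{1}_{S},x\rangle$ by up to $\sum|x_i|$ over the uncolored $i\in S$, which is not controlled by $1/n^c$; the Lovett--Meka guarantee says nothing about the non-frozen coordinates being near $0$. This is an imprecision inherited from the proposition as commonly stated (including here): in the iterated application one keeps the fractional coloring, restricts the next walk to the still-alive coordinates starting from the current $x$, and rounds only once at the very end, which is exactly how the accumulation $2\sum_j\Delta_j^i$ plus $\log n$ rounds is used in the paper's ``Wrapping up'' paragraph. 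If you want the proposition literally as written, you should either state the partial coloring fractionally (colored means $|x_i|\ge 1-\delta$) or record this convention explicitly; as a derivation of what~\cite{LM-12} actually provides and of how it is used downstream, your argument is correct.
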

  
%

\vspace{-2ex}
\paragraph{$\delta$-packing.}

Let $(X,\S)$ be a set system as above, and let $\delta > 0$ be a given integer parameter.
A \emph{$\delta$-packing} is a maximal subset $\P \subseteq \S$ such that the symmetric difference distance 
$|S_1 \triangle S_2|$ between any pair of sets $S_1, S_2 \in \P$ is strictly greater than $\delta$.
We also call such a set $\P$ \emph{$\delta$-separated}.

A key property, shown by Haussler~\cite{Haussler-95} (see also~\cite{Chazelle-92, CW-89, Mat-99, Welzl-88}), is that 
set systems with a bounded primal shatter dimension admit small $\delta$-packings. That is:

\begin{theorem}[Packing Lemma~\cite{Chazelle-92, Haussler-95}]
  \label{thm:packing}
  Let $d > 1$ be a constant, and let $(X,\S)$ be a set system on an $n$-point set with
  primal shatter dimension $d$.
  Let $1 \le \delta \le n$ be an integer parameter, and let $\P \subseteq \S$ be $\delta$-separated. 
  Then $|\P| = O((n/\delta)^{d})$.
\end{theorem}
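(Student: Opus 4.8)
\medskip
\noindent\textbf{Proof proposal.} The plan is to prove Theorem~\ref{thm:packing} by the probabilistic method: pass from $X$ to a small random subsample and play off two bounds on the induced trace of $\P$. Write $N := |\P|$. First dispose of the easy ranges: if $n/\delta = O(1)$, or more precisely if $\delta$ is below a suitable constant $c_0$, then $(n/\delta)^d = \Omega(n^d)$ and the trivial bound $N \le \pi_\S(n) \le C n^d$ already gives $N = O((n/\delta)^d)$. So assume $\delta \ge c_0$, set $s := \lceil c_1 n/\delta\rceil$ for a constant $c_1 = c_1(d,C)$ chosen large enough (so that also $s \le n$), draw a uniformly random $s$-element subset $A \subseteq X$, and consider the trace $\P|_A = \{S\cap A : S\in\P\}$. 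On the one hand, since $(X,\S)$ --- hence $\P$ --- has primal shatter dimension $d$, one has the deterministic bound $|\P|_A| \le \pi_\S(s) \le C s^d = O((n/\delta)^d)$. On the other hand, I would show that the restriction $S \mapsto S\cap A$ loses on average only a constant fraction of $\P$, i.e. $\EE\big[\,|\P|_A|\,\big] \ge N/2$. Combining the two estimates yields $N = O((n/\delta)^d)$; the whole difficulty is concentrated in the lower bound on $\EE[|\P|_A|]$ with a clean constant $c_1$ and \emph{no} extra logarithmic factor.

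For that lower bound I would expose the elements of $A$ one at a time, $\emptyset = A_0\subset A_1\subset\cdots\subset A_s = A$, and track $Y_i := |\P|_{A_i}|$, a nondecreasing quantity with $Y_0 = 1$. Adding an element either leaves the trace unchanged or splits some of its classes, and the structural point is that a trace-class containing $k$ sets of $\P$ has many splitting coordinates left: those $k$ sets are pairwise $\delta$-separated and, after fixing the part already exposed, lie inside the union of their pairwise symmetric differences, whose size must therefore be $\Omega_d(\delta\,k^{1/d})$ --- otherwise a smaller instance of the packing bound itself would be violated, so the argument runs by induction on $n$ (or on $N$). Feeding this into $\EE[Y_i - Y_{i-1}\mid A_{i-1}]$ and summing over the $s = \Theta(n/\delta)$ steps, using concavity of $x\mapsto x^{1/d}$ to handle the unknown distribution of class sizes together with the sparsity of the one-inclusion (unit-distance) graph of a VC-dimension-$d$ set system --- at most $d$ edges per vertex --- gives $\EE[Y_s] \ge N/2$ once $c_1$ is large enough.

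The last step is the main obstacle. A naive union bound over the $\binom{N}{2}$ pairs of $\P$ --- requiring that $A$ meet every symmetric difference $S\triangle S'$ --- works only when $s \gtrsim (n/\delta)\log N$; likewise an $\eps$-net argument (hit every symmetric difference of relative size $>\delta/n$, a task for a net of a VC-dimension-$O(d)$ set system) and a second-moment estimate each pay a spurious $\polylog(n/\delta)$ factor. Removing this polylogarithmic overhead is exactly Haussler's contribution (later simplified by Chazelle), and it is delicate: one must arrange the induction so that the recursive appeal to the packing bound is legitimate and carry out the sequential bookkeeping for the $Y_i$, where the per-step gain depends on the a priori unknown class-size profile; Haussler's analysis in fact yields the explicit bound $|\P| \le e(d+1)(2en/\delta)^d$. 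I would finally remark that for the discrepancy application in this paper even the weaker bound $O^{*}((n/\delta)^d)$ from the elementary $\eps$-net argument would suffice, since the final discrepancy estimate already absorbs polylogarithmic factors in $n$.
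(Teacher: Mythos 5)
First, note that the paper does not prove Theorem~\ref{thm:packing} at all: it is imported as a black box from Haussler and Chazelle, so there is no in-paper proof to compare yours against. Your outline does follow the standard route of those references (pass to a random $s$-subset with $s=\Theta(n/\delta)$, bound the trace from above by $\pi_\S(s)=O((n/\delta)^d)$, and from below in expectation by $|\P|/2$), and your framing of where the difficulty lies is accurate. But as a proof it has a genuine gap, which you yourself flag: the entire content of the theorem is the inequality $\EE[\,|\P|_A|\,]\ge N/2$ with no logarithmic loss, and that step is only gestured at. The two ingredients you name do not yet assemble into an argument. The claim that a trace class of $k$ pairwise $\delta$-separated sets forces the union of their symmetric differences to have size $\Omega_d(\delta k^{1/d})$ is itself an instance of the packing bound on a smaller ground set, so using it requires a carefully founded induction (on what quantity, and why the base case and the constant survive the recursion) that you do not set up; and the passage from that volume bound, through the increments $\EE[Y_i-Y_{i-1}\mid A_{i-1}]$ and the $d|V|$ edge bound on the one-inclusion graph, to $\EE[Y_s]\ge N/2$ is precisely Haussler's delicate bookkeeping (or Chazelle's remove-one-random-element conditioning), none of which is carried out. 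As written, the proposal is a correct road map to the literature proof rather than a proof.

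Your closing remark, however, is well taken and worth keeping: for the purposes of this paper a bound of $O^{*}((n/\delta)^d)$ obtained from the elementary $\eps$-net argument would suffice, since the final discrepancy bound already absorbs polylogarithmic factors. Indeed, the paper's own refinement, the Sensitive Packing Lemma (Theorem~\ref{thm:packing_sensitive}), is proved exactly by that $\eps$-net route and consequently carries the extra $j^d$ (i.e., polylogarithmic) factor, a point the paper acknowledges in the remark following Proposition~\ref{prop:disc_canonical_set}.
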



\vspace{-2ex}
\section{The Construction}
\label{sec:construction}
\vspace{-2ex}

We are now ready to present our construction.
Let $(X,\S)$ be a set system of bounded primal shatter dimension $d$, with the additional property that the number of
sets of size at most $k$ in the projection of $(X,\S)$ onto any $m$-point subset $X' \subseteq X$ 
is $O(m^{d_1} k^{d-d_1})$, for any $m \le n$.
We construct a decomposition of each set $S \in \S$ as a Boolean combination of ``canonical sets''
obtained from a sequence of packings that we build for $\S$. 
This decomposition is a variant of the one presented in~\cite[Chapter 5]{Mat-99} (see also~\cite{Mat-95}),
and is also referred to as \emph{chaining} in the literature (see, e.g.,~\cite{Har-Peled-11}).

\paragraph{Chaining.}

For the sake of completeness, we repeat some of the details in~\cite[Chapter 5]{Mat-99} and use similar notation
for ease of presentation.
Without loss of generality, we assume that $n = 2^{k}$, for some integer $k > 0$. 
For $j = 0, \ldots, k$, let $\F_j \subseteq \S$ be an $(n/2^{j})$-packing that is maximal under inclusion.
By definition, this implies that for any pair of sets in $\F_j$, their 
distance is larger than $n/2^{j}$ and this set is maximal with respect to this property. 
In particular, we have $\F_0 = \{\emptyset\}$, and $\F_k = \S$ by construction. 

Observe that for each $F_j \in \F_j$, there is a set $F_{j-1} \in \F_{j-1}$ with $|F_j \triangle F_{j-1}| \le n/2^{j-1}$.
This follows from the inclusion-maximality of $\F_{j-1}$.
That is, consider the set $F_{j-1}$ \emph{closest} to $F_j$ in $\F_{j-1}$.
Either $F_j = F_{j-1}$ and then the claim is trivial, or $F_j \neq F_{j-1}$, and then the opposite inequality 
$|F_j \triangle F_{j-1}| > n/2^{j-1}$ is impossible, for then $F_j$ could be added to $\F_{j-1}$, contradicting its maximality.
Let us attach to each $F_j \in \F_j$ its nearest neighbor (closest set) $F_{j-1} \in \F_{j-1}$, 
and put $A(F_j) := F_j \setminus F_{j-1}$, $B(F_j) := F_{j-1} \setminus F_j$.  We now form the set systems
$\A_j := \{ A(F_j) \mid F_j \in \F_j \}$, $\B_j := \{ B(F_j) \mid F_j \in \F_j \}$, $j = 1, \ldots, k$.
It has been observed in~\cite[Chapter 5]{Mat-99} that each set $S \in \S$ can be decomposed as
\begin{equation}
\label{eq:decomposition1}
  S = (\ldots (((A_1 \setminus B_1) \sqcup A_2) \setminus B_2) \sqcup \cdots \sqcup A_k) \setminus B_k ,
\end{equation}
where $\sqcup$ denotes disjoint union, and $A_j \in \A_j$, $B_j \in \B_j$, for each $j = 1, \ldots, k$.
Indeed, consider the nearest-neighbor ``chain'' 
$S = F_k \rightarrow F_{k-1} \rightarrow \cdots \rightarrow F_{0} = \emptyset$ (recall that $F_0 = \emptyset$ by
our assumption on $\F_0$).
Each set $F_j \in \F_j$ on this chain can be turned into its nearest neighbor $F_{j-1} \in F_{j-1}$ by adding
$B(F_j)$ and subtracting $A(F_j)$. Moreover, it is easy to verify using induction on $j \ge 1$ that 
$F_j = (\ldots (((A_1 \setminus B_1) \sqcup A_2) \setminus B_2) \sqcup \cdots \sqcup A_j) \setminus B_j$, 
and $S$ is obtained at $j=k$ (see also~\cite{MN-12} for similar considerations).


We next refine this decomposition as follows.
We partition the sets in $\S$ into $k+1$ subfamilies $\S_0, \ldots, \S_k$ where $S \in \S_i$ if 
$$
\frac{n}{2^{i}} \le |S| < \frac{n}{2^{i-1}} ,
$$
for $i = 0, \ldots, k$ (by definition, $\S_0$ contains at most one element).
Fix an index $i = 0, \ldots, k$. For each $S \in \S_i$, we modify~(\ref{eq:decomposition1}) by iterating
from $k$ down to $i$, that is, we eliminate $A_j$, $B_j$ from the decomposition for $j=1, \ldots, i-1$,
and replace it by $F_{i-1} \in \F_{i-1}$. Specifically, we have
\begin{equation}
\label{eq:decomposition2}
  S = (\ldots (((F_{i-1} \sqcup A_i) \setminus B_i) \sqcup A_{i+1}) \setminus B_{i+1}) \sqcup \cdots \sqcup A_{j}) \setminus B_{j}) \sqcup \cdots \sqcup A_k) \setminus B_k .
\end{equation}
Indeed, similarly to decomposition~(\ref{eq:decomposition1}), it is easy to verify by induction on the index $j \ge i$ of the sets that 
$F_{j} = (\ldots (((F_{i-1} \sqcup A_i) \setminus B_i) \sqcup A_{i+1}) \setminus B_{i+1}) \sqcup \cdots \sqcup A_j) \setminus B_j$,
and our claim is obtained when $j=k$.
In particular, when $i=1$ we obtain the same decomposition as in~(\ref{eq:decomposition1}), as $F_0 = \emptyset$.

Let us now fix an index $i$, and construct the 
subsets $\F_j^{i}$ of the packings $\F_j$, for each $j=i-1, \ldots, k$, as follows.
For each $S \in \S_i$, we follow its nearest-neighbor chain (where at this time we stop at $F_{i-1}$)
$S = F_k \rightarrow F_{k-1} \rightarrow \cdots \rightarrow F_{j} \rightarrow \cdots \rightarrow F_{i-1}$,
and put in $\F_j^{i}$ the corresponding element $F_j \in \F_j$, $j=i-1, \ldots, k$.
We next show that the size of each of these members $F_j^{i} \in \F_j^{i}$ is at most $O(|S|) = O(n/2^{i-1})$. 
First we show (see Appendix~\ref{app:distance} for the easy proof):

\begin{claim}
\label{clm:distance}
For each of the sets $F_j^{i} \in \F_{j}^{i}$, $j=i-1, \ldots, k$, we have:
$$
|S \triangle F_{j}^{i}| < \frac{n}{2^{j-1}} .
$$
\end{claim}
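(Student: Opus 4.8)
The plan is to prove the bound by induction on $j$, running downward from $j=k$ to $j=i-1$, exploiting the two facts established in the chaining construction: that along the nearest-neighbor chain $S = F_k \to F_{k-1} \to \cdots \to F_{i-1}$ consecutive sets satisfy $|F_j \triangle F_{j-1}| \le n/2^{j-1}$ (inclusion-maximality of $\F_{j-1}$), and that $F_j \in \F_j$ is an $(n/2^j)$-packing. The base case $j=k$ is immediate since $F_k = S$, so $|S \triangle F_k| = 0 < n/2^{k-1}$. For the inductive step, suppose $|S \triangle F_j^i| < n/2^{j-1}$ for some $j$ with $i \le j \le k$; I want the same kind of bound one level down for $F_{j-1}^i$.

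The key step is a triangle-inequality estimate for the symmetric-difference metric: $|S \triangle F_{j-1}^i| \le |S \triangle F_j^i| + |F_j^i \triangle F_{j-1}^i|$. By the induction hypothesis the first term is less than $n/2^{j-1}$, and by the nearest-neighbor property of the chain the second term is at most $n/2^{j-1}$. This gives $|S \triangle F_{j-1}^i| < n/2^{j-1} + n/2^{j-1} = n/2^{j-2}$, which is exactly the desired bound at level $j-1$ (replacing $j$ by $j-1$ in the statement $|S \triangle F_{j}^{i}| < n/2^{j-1}$). So the induction closes cleanly and yields the claim for all $j = i-1, \ldots, k$.

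I expect the proof itself to be essentially routine — the only things to be careful about are (i) that the symmetric difference distance genuinely satisfies the triangle inequality (this is standard: $S \triangle F_{j-1} \subseteq (S \triangle F_j) \cup (F_j \triangle F_{j-1})$), and (ii) bookkeeping of the powers of two so that the geometric sum telescopes to the claimed $n/2^{j-1}$ rather than something weaker. A slightly more refined version would sum the bound $|F_\ell^i \triangle F_{\ell-1}^i| \le n/2^{\ell-1}$ over $\ell = j+1, \ldots, k$ directly, obtaining $|S \triangle F_j^i| \le \sum_{\ell=j+1}^{k} n/2^{\ell-1} < n/2^{j-1}$, which is really the same computation unrolled; either presentation works, and the inductive one is cleanest. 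No genuine obstacle is anticipated here — this claim is a warm-up lemma whose role is to feed into the subsequent bound on $|F_j^i|$, and then into the packing-based size-sensitive analysis.
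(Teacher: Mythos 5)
Your proposal is correct and is essentially the paper's own argument: the paper proves Claim~\ref{clm:distance} by telescoping the chain $S = F_k \to \cdots \to F_j$ via the triangle inequality for symmetric difference and summing the geometric series $\sum_{\ell > j} n/2^{\ell-1} < n/2^{j-1}$, which is exactly the ``unrolled'' version you describe, and your downward induction is the same computation packaged differently.
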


Combining this with the fact that $|S| < n/2^{i-1}$ by construction (and the obvious relation 
$F_{j}^{i} \subseteq S \cup (S \triangle F_{j}^{i})$), we obtain: 
\begin{corollary}
  \label{cor:size_NN}
  For each of the sets $F_{j}^{i} \in \F_{j}^{i}$, $j=i-1, \ldots, k$, we have:
  $$
  |F_{j}^{i}| = O\left(\frac{n}{2^{i-1}}\right) .
  $$
\end{corollary}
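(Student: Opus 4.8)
The corollary is an immediate consequence of Claim~\ref{clm:distance} together with a trivial set-containment observation, so the plan is simply to assemble these two facts. First I would record the elementary inclusion
$$
F_{j}^{i} = (F_{j}^{i} \cap S) \cup (F_{j}^{i} \setminus S) \subseteq S \cup (S \triangle F_{j}^{i}) ,
$$
which holds for any two sets, and which gives $|F_{j}^{i}| \le |S| + |S \triangle F_{j}^{i}|$ by the union bound on cardinalities.

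Next I would substitute the two available estimates: the construction of the subfamilies $\S_i$ guarantees $|S| < n/2^{i-1}$ for every $S \in \S_i$, and Claim~\ref{clm:distance} gives $|S \triangle F_{j}^{i}| < n/2^{j-1}$ for each $j = i-1, \ldots, k$. Hence $|F_{j}^{i}| < n/2^{i-1} + n/2^{j-1}$. The only point that needs a word of care is the range of $j$: since $j \ge i-1$, the term $n/2^{j-1}$ is maximized at $j = i-1$, where it equals $n/2^{i-2} = 2n/2^{i-1}$, so in all cases $|F_{j}^{i}| < 3n/2^{i-1} = O(n/2^{i-1})$, as claimed. There is no real obstacle here — the whole content has been pushed into Claim~\ref{clm:distance} (proved in Appendix~\ref{app:distance}), and this step is just bookkeeping with the symmetric-difference triangle-type inequality and the definition of $\S_i$.
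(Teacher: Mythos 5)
Your proposal is correct and follows exactly the paper's own argument: the inclusion $F_{j}^{i} \subseteq S \cup (S \triangle F_{j}^{i})$ combined with $|S| < n/2^{i-1}$ and Claim~\ref{clm:distance}. The extra care about the range of $j$ (the term $n/2^{j-1}$ being largest at $j=i-1$) is a correct and welcome bit of bookkeeping that the paper leaves implicit.
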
 

\noindent{\bf Remark:}
We note that by construction $|F_{j}^{i} \setminus F_{j-1}^{i}|$, $|F_{j-1}^{i} \setminus F_{j}^{i}|$ are bounded by $n/2^{j-1}$, 
and this fact is used later in the entropy method.
Nevertheless, the property that the actual sets $F_{j}^{i}$ have size $O(n/2^{i-1})$ is stronger, and it enables us to prove
a tighter bound on the number of the canonical sets $F_{j}^{i}$ (Theorem~\ref{thm:packing_sensitive}), and thus on the number of
``pair sets'' $F_{j}^{i} \setminus F_{j-1}^{i}$, $F_{j-1}^{i} \setminus F_{j}^{i}$.
This is crucial for our analysis, as it enables us to reduce the bound on the entropy,
from which we will derive the desired discrepancy bound---see below.

\paragraph{Bounding the size of the packing.}

Having a fixed index $i$ as above, we consider all sets $S \in \S_i$ and the corresponding canonical sets $F_j^{i} \in \F_j^{i}$ participating
in decomposition~(\ref{eq:decomposition2}), $j = i-1, \ldots, k$. 
For a fixed index $j = i-1, \ldots, k$, 
Theorem~\ref{thm:packing} implies that the size of $\F_j^{i}$ is $O(2^{jd})$, but
our goal is to show that the actual bound can be made sensitive to the size of the sets in 
$\F_j^{i}$, that is, to $O(n/2^{i-1})$.

\begin{theorem}[Sensitive Packing Lemma]
  \label{thm:packing_sensitive}
  $$
  |\F_{j}^{i}| = O\left(\frac{j^d 2^{jd}} {2^{(d-d_1) (i-1)}} \right) .
  $$
\end{theorem}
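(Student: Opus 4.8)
The plan is to combine the two structural facts about $\F_j^i$ --- that it is $(n/2^j)$-separated, being a subfamily of the packing $\F_j$, and that all of its members have size $O(n/2^{i-1})$ by Corollary~\ref{cor:size_NN} --- by restricting $X$ to a random subsample of a carefully tuned size and then invoking the hypothesis that few \emph{small} sets survive a restriction. When $i$ is bounded by a constant nothing is needed: the Packing Lemma (Theorem~\ref{thm:packing}) already gives $|\F_j^i| \le |\F_j| = O(2^{jd})$, and $2^{(d-d_1)(i-1)}$ is then $O(1)$, so the claimed bound is immediate. So fix a large constant $C = C(d)$ and assume $2^{i-1} \ge 2C$; set $m := \min\{a\,j\,2^j,\ n\}$ with $a = a(d)$ a large constant, and $k := \lceil C\,m/2^{i-1}\rceil$, so that $1 \le k \le m$.

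I would first dispose of the sub-case $m = n$ (which occurs when $j$ is close to $k$): take $A = X$; the members of $\F_j^i \subseteq \F_j$ are already pairwise distinct and have size $O(n/2^{i-1}) \le k$, so the hypothesis yields $|\F_j^i| = O(n^{d_1}k^{d-d_1}) = O(n^d/2^{(d-d_1)(i-1)})$, and since $a\,j\,2^j \ge n$ here this is $O(j^d 2^{jd}/2^{(d-d_1)(i-1)})$. Otherwise $m = a\,j\,2^j < n$; let $A \subseteq X$ be a uniformly random $m$-element subset, and call $A$ \emph{good} if (i) the restrictions $F\cap A$, $F \in \F_j^i$, are pairwise distinct, and (ii) $|F\cap A| \le k$ for every $F \in \F_j^i$. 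A good $A$ finishes the proof: $F \mapsto F\cap A$ then embeds $\F_j^i$ injectively into the family of restrictions of $\S$ to $A$ of size at most $k$, so by hypothesis $|\F_j^i| = O(m^{d_1}k^{d-d_1})$, and substituting $m = \Theta(j\,2^j)$ and $k = \Theta(j\,2^{j-i+1})$ and collecting powers of $2$ gives exactly $O(j^d 2^{jd}/2^{(d-d_1)(i-1)})$.

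It remains to produce a good $A$, and this is where the two requirements pull against each other. For (i): for $F \ne F'$ in $\F_j^i$ we have $|F\triangle F'| > n/2^j$, so $\Prob[A\cap(F\triangle F') = \emptyset] \le (1-2^{-j})^m \le e^{-m2^{-j}}$; bounding the number of pairs by $|\F_j|^2 = O(2^{2jd})$ via Theorem~\ref{thm:packing}, the expected number of colliding pairs is $O(2^{2jd})e^{-m2^{-j}}$, which is below $1/4$ once $a$ is large enough. For (ii): for fixed $F$ the size $|F\cap A|$ has mean $\mu = |F|\,m/n \le c_0\,m/2^{i-1}$, where $c_0$ is the absolute constant of Corollary~\ref{cor:size_NN}, and a standard tail bound for sampling without replacement gives $\Prob[|F\cap A| \ge k] \le (e\mu/k)^k \le 2^{-k}$ as soon as $C \ge 2ec_0$; since $k = \Theta(j\,2^{j-i+1}) = \Omega(j)$ (using $j \ge i-1$), taking $C$ large relative to $d$ makes $|\F_j|\,2^{-k} = O(2^{jd})\,2^{-k}$ below $1/4$. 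A union bound then yields a good $A$.

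The step I expect to be the main obstacle is exactly this balancing act, which also explains the factor $j^d$ in the statement: the threshold $k$ must be taken $\Theta(m/2^{i-1})$ for the exponent of $2$ in $O(m^{d_1}k^{d-d_1})$ to come out as $jd-(d-d_1)(i-1)$, yet one needs $k = \Omega(jd)$ for the tail-and-union bound controlling the projected set sizes to survive, and together these force the subsample size $m$ to carry the extra factor $j = \log_2(n/2^j)$ beyond the naive $2^j$ of the ordinary Packing Lemma. Apart from choosing the constants $a$ and $C$ consistently, the remainder is the classical random-subsample proof of Haussler's Packing Lemma, now carried out while tracking set sizes. (The degenerate value $j = 0$, which forces $i = 1$, is trivial since then $\F_0^1 \subseteq \F_0 = \{\emptyset\}$.)
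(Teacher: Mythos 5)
Your proof is correct and follows essentially the same route as the paper's: a random sample of size $\Theta(j\,2^{j})$ playing a double role, simultaneously separating the $(n/2^{j})$-separated sets (so the projections are distinct) and controlling the projected sizes of the $O(n/2^{i-1})$-sized members, after which the size-sensitive shatter hypothesis is applied to the projection. The only difference is presentational: the paper invokes the $\eps$-net theorem for the symmetric-difference system and the relative $(\eps,\delta)$-approximation theorem as black boxes, whereas you carry out the corresponding first-moment and hypergeometric tail estimates directly.
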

\begin{proof}
  We use a variant of the analysis in~\cite[Chapter 5]{Mat-99} and refine it for our scenario.
  Put $\delta := n/2^{j}$.
  Since the following analysis will restrict sets $S \in \S$ to subsets of $X$, we will refer to $|S|$
  more explicitly as $|S \cap X|$, for the sake of presentation.
 
  We form the set system $(X,\D)$, where $\D = \{ S_1 \triangle S_2 \mid S_1, S_2 \in \S\}$.
  As observed in~\cite{Mat-99}, this set system admits an \emph{$\eps$-net} of size $O((1/\eps) \log{(1/\eps)})$,
  with a constant of proportionality depending on $d$ (see our discussion in Section~\ref{sec:intro}). 
  In fact, a random sample of that size with a sufficiently
  large constant of proportionality (that depends on $d$) is an $\eps$-net with probability at least $3/4$, say.
  
  Set $\eps := \delta /n = 1/2^{j}$ and let $N \subseteq X$ be an $\eps$-net for $(X,\D)$; $|N| = O(j 2^{j})$, with a constant
  of proportionality as above. 
  By the $\eps$-net property, whenever the symmetric difference between two sets 
  $S_1, S_2 \in \S$ has at least $\eps n = \delta$ elements, we must have $(S_1 \triangle S_2) \cap N \neq \emptyset$.
  We thus must have $S_1 \cap N \neq S_2 \cap N$ for any pair of such sets.
  This implies that the number of sets in the packing $\F_j^{i}$ does not exceed the number of sets in the set system
  $(X,\S)$ projected onto $N$ (as each set in $\F_j^{i}$ is mapped to a distinct set in this projected set system). 
  Recall that in view of Corollary~\ref{cor:size_NN} we are interested only in those sets whose 
  size is $O(n/2^{i-1})$.

  We now claim that when we project $(X,\S)$ onto $N$, each set $S \in \S$ with $|S \cap X| = O(n/2^{i-1})$
  satisfies $|S \cap N| = O(j 2^{j} / 2^{i-1})$, with probability at least $1/2$, and thus we only need to bound
  the number of sets of this size in the projected set system.
  Indeed, since $N$ is a random sample of size $O(j 2^{j})$ with a sufficiently large constant of proportionality
  (that depends on $d$), it is \emph{also} a relative $(1/2^{j}, 1/4)$-approximation for $(X,\S)$ with probability at least 
  $3/4$ (see our discussion in Section~\ref{sec:intro} and~\cite{HS-11}). 
  In particular, this means that $N$ is both a $(1/2^{j})$-net for $(X,\D)$ and a relative $(1/2^{j}, 1/4)$-approximation for 
  $(X,\S)$ with probability at least $1/2$ (and thus we obtain a single sample with a ``double role'').
  The latter property implies that 
  $$
  \left| \frac{|S \cap N|}{|N|} - \frac{|S \cap X|}{|X|} \right | \le \frac{1}{4} \cdot \frac{|S \cap X|}{|X|} ,
  $$
  if $\frac{|S \cap X|}{|X|} \ge 1/2^{j}$, and
  $$
  \left| \frac{|S \cap N|}{|N|} - \frac{|S \cap X|}{|X|} \right | \le \frac{1}{4} \cdot \frac{1}{2^{j}} ,
  $$
  otherwise.
  Combining the facts that $|S \cap X| = O(n/2^{i-1})$, $j \ge i-1$, and the bound on $|N|$, we obtain that 
  $|S \cap N| = O(j 2^{j} / 2^{i-1})$, as asserted. In particular, this is also the bound on the size of the sets $F_j^{i} \in \F_j^{i}$ 
  restricted to $N$.
  
  Let $\S_{N}$ be the family $\S$ projected onto $N$.
  By definition, the number of sets of $(N, \S_{N})$ of size $1 \le k \le |N|$ is $O(|N|^{d_1} k^{d-d_1})$, 
  and hence the number of its subsets of size $O(j 2^{j} / 2^{i-1})$ is at most 
  $$
  O\left(|N|^{d_1} \left(j 2^{j} / 2^{i-1} \right)^{d - d_1} \right) = O\left(\frac{j^{d} 2^{jd}} {2^{(d-d_1)(i-1)} } \right) ,
  $$
  from which we obtain the bound on $|\F_{j}^{i}|$.
\end{proof} 

\paragraph{Applying the entropy method.}
Let us fix an index $i$ for the family $\S_i$ under consideration.
Returning to decomposition~(\ref{eq:decomposition2}), we denote, with a slight abuse of notation,
the canonical sets $A_j$, $B_j$ by $A_j^{i}$, $B_j^{i}$, respectively.
By construction, $A_j^{i} = F_j^{i} \setminus F_{j-1}^{i}$, $B_j^{i} = F_{j-1}^{i} \setminus F_{j}^{i}$.
Let $\M_{j}^{i}$ be the collection of these sets $A_j^{i}$, $B_j^{i}$ (or $F_{j}^{i}$ if $j = i-1$).
We now set a parameter $\Delta_{j}^{i}$ for the discrepancy bound (with respect to partial coloring) 
for each of the canonical sets in $\M_{j}^{i}$, where all sets in this subcollection are assigned the same 
discrepancy $\Delta_{j}^{i}$.
We then use the entropy method on this new set system in order to obtain a partial coloring $\chi$ achieving the pre-assigned discrepancy bounds. 
Having these bounds at hand, we can then conclude that the discrepancy of any $S \in \S_i$ with respect to $\chi$ is at most
$2 \sum_{j=i-1}^{k} \Delta^{i}_j$ (using standard arguments, see, e.g.,~\cite{Mat-99}), and this will yield the desired size-sensitive bound
for $\chi(\S)$.

In order to apply the entropy method as presented in Proposition~\ref{prop:constructive_disc} we need to have,
for each $j = i-1, \ldots k$,
(i) a bound on $|\M_{j}^{i}|$,  
(ii) a bound on the size of the canonical sets in $\M_{j}^{i}$, 
and (iii) an appropriate choice for the parameter $\Delta^{i}_j$, uniformly assigned
to all these sets.
We set each of the bounds in (i)--(iii) for a fixed index $i$, and then sum them up over all 
$i=1, \ldots, k$.

For the bound in (i), we observe that each canonical set $A_j^{i}$ (resp., $B_j^{i}$) corresponds to a pair
$(F_j^{i}, F_{j-1}^{i})$, but each of these pairs can uniquely be charged to $F_j^{i}$, as $F_{j-1}^{i}$ is the
corresponding nearest neighbor in the packing $\F_{j-1}^{i}$. Thus the number of these canonical sets is $|\F_{j}^{i}|$,
for $j = i, \ldots, k$. For $j=i-1$, the bound is trivially $|\F_{j}^{i}|$.  
Thus by applying the Sensitive Packing Lemma (Theorem~\ref{thm:packing_sensitive}) we conclude that
the overall number of canonical sets is at most $ C \cdot \frac{j^d 2^{jd}} {2^{(d  - d_1)(i-1)}}$, for $j=i-1, \ldots, k$, 
where $C > 0$ is an appropriate constant whose choice depends on $d$.
By construction, the size of the sets $A_j^{i}$, $B_j^{i}$ is at most $s_j = n/2^{j-1}$ (recall that $|F_j^{i} \triangle F_{j-1}^{i}| \le n/2^{j-1}$)
and $|F_{i-1}^{i}| = O(n/2^{i-1})$ by Corollary~\ref{cor:size_NN}, 
whence we get the bound for (ii). For the choice in (iii) we set:
\begin{equation}
  \label{eq:Delta_choice}
  \Delta_{j}^{i} := A \frac{1}{(1 + |j - j_0|)^2} \left(\frac{n^{1/2 - 1/(2d)}}{2^{(1/2 - d_1/(2d)) \cdot (i-1)}} \right) \log^{1/2 + 1/2d}{n} ,
\end{equation}
where $j_0 := (1/d) \log{n} + (1 - d_1/d)(i-1) - (1 + 1/d)\log\log{n} - B$, for an appropriate constant $B > 5 + \log{C}$,
and for a sufficiently large constant of proportionality $A > 0$, whose choice depends on $B$, and will be 
determined shortly (note that all the three constants $A$, $B$, and $C$ depend on $d$). 

We provide a brief justification for our choice of $j_0$ and the appearance of the coefficient $\frac{1}{(1 + |j - j_0|)^2}$.
For the first, our goal is to bound the entropy function, bounded by the sum~(\ref{eq:entropy}) appearing in 
Proposition~\ref{prop:disc_canonical_set}, and at $j=j_0$ the corresponding summands achieve their maximum value 
(which is some linear function of $n/\log{n}$ with an appropriate constant of proportionality). 
Then when $j \ge j_0$ the exponents ``take over'' this summation, in which case it decreases superexponentially, and 
when $j < j_0$ the terms $C \cdot \frac{j^d 2^{jd}} {2^{(d - d_1)(i-1)}}$ representing the packings take over this summation and decrease geometrically.
This eventually leads to the bound stated in Proposition~\ref{prop:disc_canonical_set}.

The coefficient $\frac{1}{(1 + |j - j_0|)^2}$ in~(\ref{eq:Delta_choice}) 
guarantees that the sum $\sum_{j=i-1}^k \Delta_{j}^{i}$ (corresponding to the asymptotic bound 
on the discrepancy of any $S \in \S_i$) converges to $O\left( n^{1/2 - 1/2d}/2^{(1/2 - d_1/(2d)) \cdot (i-1)} \log^{1/2 + 1/2d}{n}\right)$.
In particular, it does not contain an extra logarithmic factor over the initial bound of $\Delta_{j}^{i}$---see below. 
Similar ideas have been used in~\cite{Mat-99}.
We defer the remaining technical details to Appendix~\ref{app:disc_canonical_set}, where we present the proof of 
Proposition~\ref{prop:disc_canonical_set}.

\begin{proposition}
  \label{prop:disc_canonical_set}
  The choice in~(\ref{eq:Delta_choice}), for $A > 0$ sufficiently large (whose choice depends on $C$
  and thus on $d$), satisfies
  \begin{equation}
    \label{eq:entropy}
    \sum_{i=1}^k \sum_{j=i-1}^{k} C \cdot \frac{j^d 2^{jd}} {2^{(d - d_1)(i-1)}} 
    \exp{\left( -\frac{{(\Delta_{j}^{i})}^2}{16 s_j}\right)} \le 
    \frac{n}{16} .
  \end{equation}
\end{proposition}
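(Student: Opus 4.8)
The plan is to verify~(\ref{eq:entropy}) by a direct calculation: substitute the value~(\ref{eq:Delta_choice}) of $\Delta_j^i$, use the definition of $j_0$ to bring both the exponent $(\Delta_j^i)^2/(16 s_j)$ and the packing coefficient $C\,j^d 2^{jd}/2^{(d-d_1)(i-1)}$ into a form that depends on the indices only through the single variable $t:=j-j_0$, collapse the double sum to a one-dimensional series in $t$, and bound that series by a constant depending only on $d$.

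First I would simplify the exponent. Writing $n=2^k$ so that $\log n=k$, squaring~(\ref{eq:Delta_choice}) and dividing by $16 s_j=32\,n/2^{j}$ gives
\[
\frac{(\Delta_j^i)^2}{16 s_j}=\frac{A^2}{32}\cdot\frac{1}{(1+|j-j_0|)^4}\cdot\frac{2^{j}\,(\log n)^{1+1/d}}{n^{1/d}\,2^{(1-d_1/d)(i-1)}}.
\]
The point of the definition $j_0=(1/d)\log n+(1-d_1/d)(i-1)-(1+1/d)\log\log n-B$ is precisely that $2^{j}/\bigl(n^{1/d}2^{(1-d_1/d)(i-1)}\bigr)=2^{\,j-j_0}/\bigl((\log n)^{1+1/d}2^{B}\bigr)$, so the logarithmic powers cancel and
\[
\frac{(\Delta_j^i)^2}{16 s_j}=\frac{A^2}{32\cdot 2^{B}}\cdot\frac{2^{\,t}}{(1+|t|)^4}=:\alpha\cdot\frac{2^{\,t}}{(1+|t|)^4},\qquad t:=j-j_0,\quad\alpha:=\frac{A^2}{32\cdot 2^{B}}.
\]
The same substitution turns the coefficient into a function of $t$ alone: since $2^{jd}=(2^{j_0})^d\,2^{td}$ with $(2^{j_0})^d=n\,2^{(d-d_1)(i-1)}/\bigl((\log n)^{d+1}2^{Bd}\bigr)$, and since $j\le k=\log n$ forces $j^d\le(\log n)^d$, we obtain
\[
C\cdot\frac{j^d 2^{jd}}{2^{(d-d_1)(i-1)}}\ \le\ \frac{C\,n}{(\log n)\,2^{Bd}}\cdot 2^{td}.
\]

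Combining the two estimates, every summand of~(\ref{eq:entropy}) is at most $\dfrac{C\,n}{(\log n)\,2^{Bd}}\,2^{td}\exp\!\Bigl(-\alpha\,\dfrac{2^{t}}{(1+|t|)^4}\Bigr)$, which depends on $(i,j)$ only through $t=j-j_0$. Hence for each fixed $i$ the inner sum over $j$ is bounded by $\Sigma:=\sup_{\phi\in[0,1)}\sum_{t\in\mathbb Z+\phi}2^{td}\exp\!\bigl(-\alpha\,2^{t}/(1+|t|)^4\bigr)$, and since there are $k=\log n$ values of $i$ the whole double sum is at most $(\log n)\cdot\frac{C n}{(\log n)\,2^{Bd}}\cdot\Sigma=\frac{C n}{2^{Bd}}\,\Sigma$, so it suffices to prove $\Sigma\le 2^{Bd}/(16C)$. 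To bound $\Sigma$ I split at $t=0$: for $t\le 0$ the exponential factor is at most $1$, so that part is at most $\sum_{t\le 0}2^{td}\le (1-2^{-d})^{-1}$, a constant, and moreover every fixed term of it tends to $0$ as $\alpha\to\infty$; for $t>0$, since $t(1+t)^4/2^{t}$ is bounded on $[0,\infty)$ one can take $A$ (hence $\alpha$) large enough — depending only on $d$ — so that $\alpha\,2^{t}/(1+t)^4\ge(d+1)(\ln 2)\,t$ for all $t>0$, whence each term is at most $2^{-t}$ and the $t>0$ part is $O(1)$ and in fact $\to 0$ as $A\to\infty$. Thus $\Sigma$ can be driven below any prescribed positive constant by enlarging $A$, and since the hypothesis $B>5+\log C$ guarantees that $2^{Bd}/(16C)$ is bounded below by a fixed positive constant (indeed exceeds $2$), the required inequality $\Sigma\le 2^{Bd}/(16C)$ holds for $A$ sufficiently large, which proves~(\ref{eq:entropy}).

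The main obstacle is the bookkeeping in the first two displays: massaging the exponent and the packing coefficient into the clean ``function of $t$ alone'' shape is exactly what the otherwise opaque definition of $j_0$ is engineered for, and one must track carefully the interplay between the $n^{1/d}$, the $2^{(1-d_1/d)(i-1)}$ and the $(\log n)^{1+1/d}$ factors (and note that the $j^d\le(\log n)^d$ loss is what supplies the $(\log n)^{-1}$ that cancels the $\log n$ coming from the sum over $i$). Once this is done, collapsing the double sum is immediate because the per-term bound no longer depends on $i$, and the remaining estimate is a routine split of a shifted-lattice series into a geometric tail and a super-exponentially decaying tail; the only further care needed is that the $t$-series bound be uniform over the fractional part of $j_0$ and that $A$ be fixed large relative to $B$ and $C$ (all of which are functions of $d$ only).
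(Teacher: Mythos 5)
Your proof is correct and follows essentially the same route as the paper's: after substituting~(\ref{eq:Delta_choice}), the definition of $j_0$ reduces everything to a series in $t=j-j_0$, the $\log n$ from the sum over $i$ is absorbed by the $j^d\le(\log n)^d$ slack against $(\log n)^{d+1}$, and the series is split at $t=0$ into a geometric part and a super-exponentially decaying part. Your treatment is marginally more careful than the paper's (uniformity over the fractional part of $j_0$, and collapsing both halves into one series), but the argument is the same.
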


\noindent{\bf Remark:}
Currently, our analysis is slightly suboptimal, as our bound in~(\ref{eq:Delta_choice}) contains 
an extra fractional logarithmic power, which results from the following reasons:
(i) We may overcount in our bound on the entropy function the same 
set $A_j^{i}$ (or $B_j^{i}$) when we iterate over $i=1, \ldots, k$; recall that for each $S \in \S_i$ we follow its nearest-neighbor 
chain and put in each $\F_j^{i}$ the corresponding element from $\F_j$, thus an element from $\F_j$ may appear in multiple layers $i$.
(ii) In the Sensitive Packing Lemma (Theorem~\ref{thm:packing_sensitive}) the size of the random sample $N$ is $O((n/\delta) \log{(n/\delta)})$, 
whereas the analysis of the original Packing Lemma (Theorem~\ref{thm:packing}), resulting in the bound $O((n/\delta)^d)$, 
considers a sample of only $O(n/\delta)$ elements. 
Nevertheless, due to the fact that our sample is also a relative approximation (to exploit the property that the packing contains 
sets of size at most $O(n/2^{i-1})$), we had to use a slightly larger sample. 
It is an interesting open problem whether in the case $d_1 = 1$ (where the set system is ``well-behaved'') the improved bound on the size of 
relative approximations, recently shown by the author~\cite{Ezra-13}, can be integrated into the analysis of the original Packing Lemma.

%

\paragraph{Wrapping up.}

Incorporating Propositions~\ref{prop:constructive_disc} and~\ref{prop:disc_canonical_set}, we obtain that there exists a partial
coloring $\chi$, computable in expected polynomial time, which colors at least $n/2$ points of $X$, such that
$\chi(M_j^{i}) \le \Delta_{j}^{i} + 1/n^c$, for each $M_{j}^{i} \in \M_{j}^{i}$, 
where $c > 0$ is an arbitrarily large constant. 
But then our choice in~(\ref{eq:Delta_choice}) and our earlier discussion imply that, for each $S \in \S_i$, 
$$
\chi(S) \le 2\sum_{j=i-1}^{k} \Delta^{i}_j =  O\left( \frac{n^{1/2- 1/(2d)}}{2^{(i-1) \cdot (1/2 - d_1/(2d))}} \log^{1/2 + 1/2d}{n} \right),
$$
since the sum $\sum_{j=i-1}^{k}\frac{1}{(1 + |j - j_0|)^2}$ converges. Due to the fact that  $n/2^{i} \le |S| < n/2^{i-1}$ (by definition),
the latter term is:
$$
O\left( |S|^{1/2 - d_1/(2d)} n^{(d_1-1)/(2d)} \log^{1/2 + 1/(2d)}{n} \right) .
$$

Applying the partial coloring procedure (Proposition~\ref{prop:constructive_disc}) for at most $\log{n}$ iterations 
(which yields a full coloring for $X$), we obtain that for each $S \in \S_i$,  
$\chi(S) = O\left( |S|^{1/2 - d_1/(2d)} n^{(d_1-1)/(2d)} \log^{3/2 + 1/2d}{n} \right)$.
We have just shown:

\begin{theorem}
  \label{thm:disc_bound}
  Let $(X,\S)$ be a (finite) set system of primal shatter dimension $d$ with the additional property that 
  in any set system restricted to $X' \subseteq X$, the number of sets of size $k \le |X'|$ is 
  $O(|X'|^{d_1} k^{d - d_1})$, $1 \le d_1 \le d$. Then 
  $$
  \disc(\S) =  O\left( |S|^{1/2 - d_1/(2d)} n^{(d_1 - 1)/(2d)} \log^{3/2 + 1/2d}{n} \right),
  $$ 
  where the constant of proportionality depends on $d$.
  In particular, when $(X,\S)$ is well-behaved (that is, $d_1=1$), the bound becomes
  $O\left( |S|^{1/2 - 1/(2d)} \log^{3/2 + 1/2d}{n} \right)$.
\end{theorem}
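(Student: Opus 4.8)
The plan is to run the chaining-plus-entropy machinery of Section~\ref{sec:construction} and assemble the pieces. First I would split $\S$ into the size classes $\S_0,\dots,\S_k$, with $n/2^{i}\le|S|<n/2^{i-1}$ for $S\in\S_i$, and fix one index $i$. For that class I build the truncated nearest-neighbor chains $S=F_k^{i}\to\cdots\to F_{i-1}^{i}$ through the packings $\F_{i-1},\dots,\F_k$, obtaining the decomposition~(\ref{eq:decomposition2}) of each $S\in\S_i$ into the canonical sets $A_j^{i}=F_j^{i}\setminus F_{j-1}^{i}$ and $B_j^{i}=F_{j-1}^{i}\setminus F_j^{i}$ (together with the base set $F_{i-1}^{i}$, treated the same way). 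The key quantitative input here is that $|F_j^{i}|=O(n/2^{i-1})$ by Claim~\ref{clm:distance} and Corollary~\ref{cor:size_NN}; this is exactly what upgrades the plain packing estimate to the size-sensitive bound $|\F_j^{i}|=O\!\left(j^{d}2^{jd}/2^{(d-d_1)(i-1)}\right)$ of Theorem~\ref{thm:packing_sensitive}.

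Next, for each $j=i-1,\dots,k$ I let $\M_j^{i}$ be the family of the corresponding canonical sets and assign every member of $\M_j^{i}$ the common target discrepancy $\Delta_j^{i}$ of~(\ref{eq:Delta_choice}). The point of the weight $1/(1+|j-j_0|)^2$ centered at $j_0=(1/d)\log n+(1-d_1/d)(i-1)-(1+1/d)\log\log n-B$ is that the layer-$j$ contribution $C\,j^{d}2^{jd}2^{-(d-d_1)(i-1)}\exp(-(\Delta_j^{i})^2/(16 s_j))$ to the entropy sum peaks near $j=j_0$ at a value of order $n/\log n$, decays geometrically for $j<j_0$ because the $2^{jd}$ packing growth is beaten, and decays superexponentially for $j>j_0$ because the exponential dominates. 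Summing first over $j$ and then over $i=1,\dots,k$ is precisely the content of Proposition~\ref{prop:disc_canonical_set}, which certifies that for $A$ large enough (and $B>5+\log C$) this total is at most $n/16$. That is exactly the hypothesis of Proposition~\ref{prop:constructive_disc} applied to the set system whose members are all the $\M_j^{i}$'s.

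Given that, Proposition~\ref{prop:constructive_disc} yields a partial coloring $\chi$ of $X$ that colors at least $n/2$ of its points with $\chi(M)\le\Delta_j^{i}+n^{-c}$ for every $M\in\M_j^{i}$. Feeding this into~(\ref{eq:decomposition2}) and applying the triangle inequality along the signed disjoint union gives, for any $S\in\S_i$ and up to the negligible additive $n^{-c}$ errors, $\chi(S)\le 2\sum_{j=i-1}^{k}\Delta_j^{i}$; since $\sum_{j}1/(1+|j-j_0|)^2$ converges to an absolute constant, this is $O\!\left(n^{1/2-1/(2d)}\,2^{-(i-1)(1/2-d_1/(2d))}\log^{1/2+1/(2d)}n\right)$, and using $n/2^{i}\le|S|<n/2^{i-1}$ this rewrites as $O\!\left(|S|^{1/2-d_1/(2d)}n^{(d_1-1)/(2d)}\log^{1/2+1/(2d)}n\right)$. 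Because only half of $X$ is colored, I then recurse the partial-coloring construction on the uncolored part (the hypothesis is inherited by restrictions); after $O(\log n)$ rounds the coloring is total, and since both exponents $1/2-d_1/(2d)$ and $(d_1-1)/(2d)$ are nonnegative, each round contributes at most the above bound with $|S|$ and $n$ in place of their restricted versions, so summing the $O(\log n)$ rounds costs one extra $\log n$ factor and gives $\disc(\S)=O\!\left(|S|^{1/2-d_1/(2d)}n^{(d_1-1)/(2d)}\log^{3/2+1/(2d)}n\right)$; when $d_1=1$ the $n$-power vanishes and the bound becomes $O\!\left(|S|^{1/2-1/(2d)}\log^{3/2+1/(2d)}n\right)$. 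Since the whole construction involves only $O(\polylog n\cdot n^{d})$ canonical sets, the coloring is moreover computable in expected polynomial time.

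I expect essentially all the difficulty to sit in Proposition~\ref{prop:disc_canonical_set}: one has to split the double sum at $j=j_0$, bound the left geometric tail and the right doubly-exponential tail separately and uniformly in $i\le k=\log n$, absorb the $j^{d}$ factor and the outer sum over $i$, and check that the explicit $\log\log n$ corrections in $j_0$ together with the constant choices ($B>5+\log C$, $A$ sufficiently large in terms of $B$) keep everything safely below $n/16$. Everything before that — the chaining identity, the size bound $|F_j^{i}|=O(n/2^{i-1})$, the sensitive packing count, and the closing triangle-inequality bookkeeping — is routine once the entropy estimate is available.
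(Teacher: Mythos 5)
Your proposal is correct and follows essentially the same route as the paper: chaining through the packings, the size-sensitive packing bound via Corollary~\ref{cor:size_NN}, the weighted choice of $\Delta_j^{i}$ feeding Proposition~\ref{prop:disc_canonical_set}, the Lovett--Meka partial coloring, and $O(\log n)$ rounds of recursion on the uncolored points (with the monotonicity-in-exponents observation justifying the per-round bound) accounting for the final $\log^{3/2+1/(2d)}n$ factor. No gaps.
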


\noindent{\bf Remark:}
We note that although the number of uncolored points in $X$ decreases by at least a half after applying 
Proposition~\ref{prop:constructive_disc}, it does not necessarily guarantee that the size of a set $S \in \S$ 
decreases by the same factor, and so we can bound it from above only by $n/2^{i}$ at each round. 
Moreover, it may happen that a set $S \in \S_i$ from the previous round now lies in a different class at the current partition.
Thus at each round we need to resume the process from scratch due to which we obtain an extra logarithmic factor as shown in the bound above. 

\paragraph{Algorithmic aspects.}


In order to apply the randomized algorithm of Lovett and Meka~\cite{LM-12}, we first need to construct, for each $i=1, \ldots, k$, 
the canonical sets in $\F_{j}^{i}$. 

In order to do so for a fixed $i$, we need to construct a $\delta$-packing, for $\delta = n/2^{j}$, $j = i-1, \ldots, k$, such that 
the size of each set in the packing does not exceed $K n/2^{i-1}$, for an appropriate constant $K > 0$.
We thus iterate over $j = i-1, \ldots, k$, and form a $\delta$-packing $\F_j^{i}$ as above in a brute force 
manner by initially picking an arbitrary set $F \in \S$, whose size is at most $K n/2^{i-1}$, to be included into
$\F_{j}^{i}$, and then keep collecting sets $F' \in \S$ into $\F_j^{i}$ if (i) $|F'| \le K n/2^{i-1}$ and 
(ii) the distance between $F'$ and each of the elements currently in $\F_j^{i}$ is at least $\delta$. 
We stop as soon as there are no leftover sets $F'$ of the above kind. The set just created is 
inclusion-maximal and thus according to the Sensitive Packing Lemma (Theorem~\ref{thm:packing_sensitive}) its size is only 
$O\left(\frac{j^d 2^{jd}} {2^{(d - d_1)(i-1)}} \right)$.
It is easy to verify that the construction of each $\delta$-packing $\F_j^{i}$ can be performed in polynomial time due to the fact 
that the number of sets in $\S$ is only $O(n^{d})$. 
Omitting any further details we conclude:

\begin{corollary}
  \label{cor:disc}
  A coloring $\chi$ achieving the discrepancy bound in Theorem~\ref{thm:disc_bound} can be computed in expected polynomial time.
\end{corollary}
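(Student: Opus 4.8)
The plan is to assemble the three algorithmic ingredients already developed---the brute-force construction of the packings $\F_j^{i}$, the entropy bound of Proposition~\ref{prop:disc_canonical_set}, and the constructive partial-coloring routine of Lovett and Meka (Proposition~\ref{prop:constructive_disc})---and to argue that their composition runs in expected polynomial time. First I would observe that since $(X,\S)$ has primal shatter dimension $d$, we have $|\S| = O(n^{d})$, so the entire set system can be listed explicitly in polynomial time. For a fixed round and a fixed index $i$, and for each $j = i-1, \ldots, k$ (with $k = \log n$), I would build the $\delta$-packing $\F_j^{i}$ with $\delta = n/2^{j}$ greedily: repeatedly scan $\S$ for a set $F'$ with $|F'| \le K n/2^{i-1}$ whose symmetric difference with every set already chosen exceeds $\delta$, and add it; stop when no such set remains. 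Each scan touches $O(n^{d})$ sets and each pairwise test costs $O(n)$, and the resulting family is inclusion-maximal among the size-restricted sets, hence has size $O(j^{d} 2^{jd}/2^{(d-d_1)(i-1)}) = n^{O(1)}$ by the Sensitive Packing Lemma (Theorem~\ref{thm:packing_sensitive}); so the whole packing is produced in polynomial time. From these packings I would read off the canonical ``pair sets'' $A_j^{i} = F_j^{i}\setminus F_{j-1}^{i}$ and $B_j^{i} = F_{j-1}^{i}\setminus F_j^{i}$ (and $F_{i-1}^{i}$ when $j = i-1$), obtaining the set system $\M = \bigcup_{i,j}\M_j^{i}$ of polynomial size.

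Second, I would feed $\M$ to the algorithm of Proposition~\ref{prop:constructive_disc} with the weights $\Delta_j^{i}$ prescribed in~(\ref{eq:Delta_choice}). The hypothesis $\sum_{S}\exp(-\Delta_S^2/(16|S|)) \le n/16$ required by that proposition is exactly what Proposition~\ref{prop:disc_canonical_set} verifies for this choice (after fixing the constant $A$ large enough), so the routine applies and returns, in expected polynomial time in $n$ and $|\M| = n^{O(1)}$, a partial coloring $\chi$ that colors at least half of the currently uncolored points and satisfies $\chi(M) \le \Delta_j^{i} + 1/n^{c}$ for every canonical set $M \in \M_j^{i}$. Summing the telescoping decomposition~(\ref{eq:decomposition2}) then yields $\chi(S) \le 2\sum_{j=i-1}^{k}\Delta_j^{i} + O(k/n^c)$ for each $S \in \S_i$, which is the per-round bound underlying the ``Wrapping up'' analysis.

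Third, I would repeat the whole process on the set of points left uncolored, for $O(\log n)$ rounds, until every point is colored; at the start of each round the set system is re-projected onto the surviving point set, the sets are re-partitioned into the size classes $\S_i$, and fresh packings are built. Each round is polynomial and the number of rounds is $O(\log n)$, so the total running time is polynomial. Since each round's running time is only bounded in expectation, I would invoke linearity of expectation---the rounds are executed sequentially and each has polynomially bounded expected cost no matter how the previous rounds turned out---to conclude that the overall expected running time is polynomial. Accumulating the per-round discrepancies exactly as in the proof of Theorem~\ref{thm:disc_bound} gives the claimed bound for the final full coloring.

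I expect the main obstacle to be the bookkeeping that keeps every quantity polynomial across the nested loops---in particular, arguing that the greedily constructed families really are inclusion-maximal among the size-restricted sets so that Theorem~\ref{thm:packing_sensitive} applies verbatim to them, and that re-running the construction from scratch in each of the $O(\log n)$ rounds (necessitated because a set's size class can change after a partial coloring) does not introduce a dependence that breaks the expected-polynomial-time guarantee. These points are routine but must be handled with care; everything else is a direct appeal to Propositions~\ref{prop:constructive_disc} and~\ref{prop:disc_canonical_set} together with Theorem~\ref{thm:packing_sensitive}.
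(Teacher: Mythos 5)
Your proposal matches the paper's own argument: the paper likewise constructs each size-restricted $\delta$-packing $\F_j^{i}$ by brute-force greedy selection, bounds its size via the Sensitive Packing Lemma (Theorem~\ref{thm:packing_sensitive}), notes that $|\S| = O(n^{d})$ keeps everything polynomial, and then invokes the Lovett--Meka routine (Proposition~\ref{prop:constructive_disc}) with the entropy condition certified by Proposition~\ref{prop:disc_canonical_set}, iterated over $O(\log n)$ rounds. Your write-up simply fills in the bookkeeping that the paper elides with ``omitting any further details,'' so it is correct and essentially identical in approach.
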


\paragraph{The case of points and halfspaces in $d$ dimensions.}

When $(X,\S)$ is a set system of points and halfspaces in $d$-space, it is known that the number of halfspaces containing
at most $k$ points of $\S$ is $O(n^{\lfloor{d/2}\rfloor} k^{\lceil{d/2}\rceil})$ (see, e.g.,~\cite{CS-89}).
Thus from Theorem~\ref{thm:disc_bound} and Corollary~\ref{cor:disc} we conclude:
\begin{theorem}
  \label{thm:halfspaces}
  Let $(X,\S)$ be a set system of points and halfspaces in $d$-space.
  Then 
  $$
  \disc(\S) =  O\left( |S|^{1/4} n^{1/4 - 1/(2d)} \log^{3/2 + 1/2d}{n} \right),
  $$ 
  if $d$ is even, and 
  $$
  \disc(\S) =  O\left( |S|^{1/4 + 1/(4d)} n^{1/4 - 3/(4d)} \log^{3/2 + 1/2d}{n} \right) ,
  $$
  if $d$ is odd.
  In particular, when $d=2,3$, these bounds become
  $O\left( |S|^{1/4} \log^{7/4}{n} \right)$, $O\left( |S|^{1/3} \log^{5/3}{n} \right)$,
  respectively.
  The constant of proportionality in these bounds depends on $d$, 
  and the corresponding coloring $\chi$ can be computed in expected polynomial time. 
\end{theorem}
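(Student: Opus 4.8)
The plan is to deduce this directly from Theorem~\ref{thm:disc_bound} and Corollary~\ref{cor:disc} by verifying that a set system of points and halfspaces in $\reals^d$ meets their hypotheses with the right value of $d_1$. First I would recall that such a set system has primal shatter dimension $d$: the number of distinct subsets of an $m$-point set cut out by halfspaces is $O(m^d)$, a standard consequence of the fact that the arrangement of $m$ hyperplanes dual to the points has $O(m^d)$ cells. Second, the key structural input is the $(\le k)$-set bound of~\cite{CS-89}: among the halfspaces containing at most $k$ of a given $n$-point set, only $O(n^{\lfloor d/2 \rfloor} k^{\lceil d/2 \rceil})$ are combinatorially distinct. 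Crucially, the restriction of a points-and-halfspaces set system to a subset $X' \subseteq X$ is again a points-and-halfspaces set system on $|X'|$ points, so this bound applies verbatim with $n$ replaced by $m = |X'|$; hence the hypothesis of Theorem~\ref{thm:disc_bound} holds with $d_1 = \lfloor d/2 \rfloor$ and $d - d_1 = \lceil d/2 \rceil$.

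With this identification in hand, the remaining work is purely arithmetic: substitute $d_1 = \lfloor d/2 \rfloor$ into the bound
$$
\disc(\S) = O\!\left( |S|^{1/2 - d_1/(2d)}\, n^{(d_1-1)/(2d)}\, \log^{3/2 + 1/(2d)}{n} \right)
$$
of Theorem~\ref{thm:disc_bound} and split on the parity of $d$. For even $d$ we have $d_1 = d/2$, giving $1/2 - d_1/(2d) = 1/4$ and $(d_1-1)/(2d) = 1/4 - 1/(2d)$; for odd $d$ we have $d_1 = (d-1)/2$, giving $1/2 - d_1/(2d) = 1/4 + 1/(4d)$ and $(d_1-1)/(2d) = 1/4 - 3/(4d)$. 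These substitutions yield the two displayed bounds. Specializing, $d=2$ falls in the even case and the $n$-exponent $1/4 - 1/(2d)$ vanishes, leaving $O(|S|^{1/4} \log^{7/4}{n})$; $d=3$ falls in the odd case and the $n$-exponent $1/4 - 3/(4d)$ vanishes, leaving $O(|S|^{1/3} \log^{5/3}{n})$.

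Finally, the algorithmic claim is immediate from Corollary~\ref{cor:disc}, since the construction there only uses that $|\S| = O(n^d)$, which holds here, together with the $\delta$-packing routine whose running time was already argued to be polynomial. The only point requiring any care is the case analysis on the parity of $d$ together with the observation that the $(\le k)$-set bound is inherited by all point subsets; everything substantive---the chaining decomposition, the Sensitive Packing Lemma (Theorem~\ref{thm:packing_sensitive}), the entropy argument via Propositions~\ref{prop:constructive_disc} and~\ref{prop:disc_canonical_set}, and the bookkeeping of the polylogarithmic loss---has already been carried out in the proof of Theorem~\ref{thm:disc_bound}, so there is no genuine obstacle remaining.
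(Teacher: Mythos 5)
Your proposal is correct and is essentially the paper's own proof: the paper likewise invokes the Clarkson--Shor $(\le k)$-set bound $O(n^{\lfloor d/2\rfloor}k^{\lceil d/2\rceil})$ to set $d_1=\lfloor d/2\rfloor$ in Theorem~\ref{thm:disc_bound}, and the rest is the same parity-based arithmetic plus Corollary~\ref{cor:disc} for the algorithmic claim. Your explicit remark that the bound is inherited by restrictions to subsets $X'\subseteq X$ is a point the paper leaves implicit, and your exponent computations all check out.
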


\paragraph{Concluding remarks and further research.}

We note that whereas our construction is a variant of that of Matou\v{s}ek~\cite{Mat-95}, a key ingredient in our analysis is the 
Sensitive Packing Lemma (Theorem~\ref{thm:packing_sensitive}), where we restrict each set in a $\delta$-packing to have a bounded size. 
As our analysis shows, the number of such sets (when $i$ is not too small) is considerably smaller than the bound 
$O((n/\delta)^{d})$ derived in the (original) Packing Lemma (Theorem~\ref{thm:packing}). 
This bound is eventually integrated into the entropy method applied in
Proposition~\ref{prop:disc_canonical_set} (in addition to our decomposition~(\ref{eq:decomposition2})), 
from which we eventually obtain the discrepancy bound in~(\ref{eq:Delta_choice}).

This study raises several open problems, some of which are under on-going research.
First, we are now in the process of deriving improved bounds for relative $(\eps,\delta)$-approximations
using our discrepancy bounds in Theorem~\ref{thm:disc_bound}. A major technical difficulty is the fact that,
unlike the colorings given in~\cite{HS-11, SZ-12}, our coloring is not necessarily balanced and this requires
an extra care when one applies the ``halving technique'' in~\cite{HS-11} (see also~\cite{Chaz-01}). 
In Appendix~\ref{app:relative_approx} we present our considerations for deriving a relative $(\eps,\delta)$-approximation 
for well-behaved set systems (that is, $d_1=1$). We plan to finalize these details and extend them to the more general case 
in the full version of this paper.

Another question, related to the remarks following Proposition~\ref{prop:disc_canonical_set} and Theorem~\ref{thm:disc_bound} 
is whether the logarithmic factor in our discrepancy bound can be removed, in which case it becomes optimal.
Initially, it will be interesting to revisit the case of points in the plane and halfplanes, studied by Har-Peled and Sharir~\cite{HS-11},
and reduce the logarithmic factor to $\log{n}$ (to be matched with their bound). As stated earlier in the paper, for this particular 
case (and to well-behaved set systems in general) we hope to be able to integrate the recent improved bounds for relative 
$(\eps,\delta)$-approximations~\cite{Ezra-13} with the analysis in this paper.

Last but not least, we are interested in the implications of our discrepancy bound to \emph{approximate range search}.
In particular, for points and halfspaces in ${\reals}^d$, can we improve the query time in the approximate range-counting 
machinery of Aronov and Sharir~\cite{AS-10}?



\paragraph*{Acknowledgments.}
The author wishes to thank Boris Aronov and Micha Sharir for helpful discussions and for their
tremendous help in writing this paper.

\appendix

\section{The Construction}
\label{app:construction}

\subsection{Proof of Claim~\ref{clm:distance}:}
\label{app:distance}
By construction we have $|F_j^{i} \triangle F_{j-1}^{i}| \le n/2^{j-1}$, $F_j^{i} \in \F_j^{i}$, and $F_{j-1}^{i} \in \F_{j-1}^{i}$,
for each $j=i, \ldots, k$.
We now apply the triangle inequality on the symmetric 
difference\footnote{This follows from the property that for each triple of sets $X$, $Y$, $Z$, 
  we have $X \triangle Z \subseteq (X \triangle Y) \cup (Y \triangle Z)$.} 
and obtain:
$$
|S \triangle F_{j}^{i}| \le  
|S \triangle F_{k-1}^{i}| + |F_{k-1}^{i} \triangle F_{k-2}^{i}| + \cdots + |F_{j+1}^{i} - F_{j}^{i}|
\le
\frac{n}{2^{k-1}} + \frac{n}{2^{k-2}} + \cdots + \frac{n}{2^{j}}
< \frac{n}{2^{j-1}} ,
$$
as asserted.

\subsection{Proof of Proposition~\ref{prop:disc_canonical_set}:}
\label{app:disc_canonical_set}
  
We first note that at $j_0$ the above exponent becomes a constant, whereas the size of the packing 
becomes roughly $n/\log{n}$ (for a fixed index $i$). Indeed, applying our choice in~(\ref{eq:Delta_choice}), we have
$$
\exp{\left( -\frac{{(\Delta_{j}^{i})}^2}{16 s_j}\right)} = 
\exp{\left( -\frac{A^2 \cdot 2^{j-1}\log^{1 + 1/d}{n}}{16(1 + |j - j_0|)^4 n^{1/d} 2^{(1 - d_1/d) \cdot (i-1)}} \right)} ,
$$
which is $\exp{\left(-\frac{A^2}{16 \cdot 2^{B+1}}\right)}$ at
$j =j_0 = (1/d) \log{n} + (1 - d_1/d)(i-1) - (1 + 1/d)\log\log{n} - B$.
Concerning the bound on the packing size, $C \cdot \frac{j^{d} 2^{jd}} {2^{(d - d_1)(i-1)} }$, 
since $j$ can always be bounded by $k=\log{n}$, at $j=j_0$ we obtain:
$$
C \cdot \frac{n 2^{(d - d_1)(i-1)} \log^d{n} } {2^{(d - d_1)(i-1)} 2^{dB} \log^{d+1}{n}} =
C \cdot \frac{n}{2^{dB}\log{n}} .
$$

We now fix an index $i$, split the summation into the two parts $j \ge j_0$ and $i-1 \le j < j_0$, and then 
bound each part in turn. In the first part, the exponent will ``take over'' the summation in the sense that it
decreases superexponentially, making the other factors insignificant, and in the second part, 
the packing size will decrease geometrically. Thus the
``peak'' of this summation is obtained at $j = j_0$, and is decreasing as we go beyond or below $j$.

For the first part, put $j := j_0 + l$, for an integer $l \ge 0$, and then
$$
\sum_{i=1}^k \sum_{j=j_0}^{k} 
C \cdot \frac{j^d 2^{jd}} {2^{(d - d_1) (i-1)}} \exp{\left( -\frac{{(\Delta_{j}^{i})}^2}{16 s_j}\right)} \le
\sum_{i=1}^k \sum_{l=0}^{k - j_0} 
C \cdot \frac{n 2^{ld} } { 2^{dB} \log{n}} 
\exp{\left( - \frac{A^2}{16} \cdot \frac{ 2^{l-(B+1)} }{(1 + l)^4} \right) } 
$$
$$
\le
C \cdot n 2^{-dB} \sum_{l=0}^{k-j_0} 
2^{ld} \exp{\left( - \frac{A^2}{16} \cdot \frac{ 2^{l-(B+1)}}{(1 + l)^4} \right) } ,
$$ 
where the logarithmic factor in the packing size is now eliminated due to the summation over $i$.
The exponents in the above sum decrease superexponentially.
Choosing $A$ sufficiently large (say, $A > 2^{6 + (B+1) + \log{d}}$) and having $B > 5 + \log{C}$ as above, we can 
guarantee that the latter sum is strictly smaller than $n/32$.

When $j < j_0$, put $j := j_0 - l$, $l > 0$ as above. We now obtain, by just bounding the exponent from above by $1$,
and using similar considerations as above:
$$
\sum_{i=1}^k \sum_{j=i-1}^{j_0 - 1} 
C \cdot \frac{j^d 2^{jd}} {2^{(d - d_1) (i-1)}} \exp{\left( -\frac{{\Delta_{j}^{i}}^2}{16 s_j}\right)} \le
\sum_{l=1}^{j_0 - (i-1)} C \cdot \frac{n}{2^{d(l + B)}} .
$$
Once again, our choice for $B$ guarantees that the above (geometrically decreasing) sum is strictly smaller than $n/32$. 
Thus the entire summation is bounded by $n/16$, as asserted.

\section{Applications to Relative $(\eps,\delta)$-Approximations}
\label{app:relative_approx}

In this section, we present some of our initial ideas in constructing relative $(\eps,\delta)$-approximations after having
a size-sensitive discrepancy bound at hand.
Our construction is a variant of the ``halving technique'' appeared in~\cite{HS-11} (based on a technique in~\cite{Chaz-01}), where the 
main difference between this construction to ours is the fact that the coloring $\chi$ that we produce is not necessarily balanced.
At this version we present our considerations only for the case $d_1 = 1$ (well-behaved set systems).
Following the arguments in~\cite{HS-11}, it is sufficient to construct a $(\nu,\alpha)$-sample for $(X,\S)$ as such a sample
is equivalent to a relative $(\eps,\delta)$-approximation (where $\nu$, $\eps$ and $\alpha$, $\delta$ are within some constant factor 
from each other---see Section~\ref{sec:intro}).

Our construction proceeds over iterations, where we repeatedly ``halve'' $X$ until we obtain a subset of an appropriate size, 
which we argue to comprise the resulting $(\nu,\alpha)$-sample. Put $X_0 := X$. Then, at each iteration $i \ge 1$, 
we let $X_i$, $X_i'$ be the two corresponding portions of $X_{i-1}$, where the points in $X_i$ are, say, colored $+1$ and the points in 
$X_i'$ are colored $-1$. Assume, w.l.o.g., $|X_i| \ge |X_i'|$. We now keep $X_i$, remove $X_i'$ and continue in this ``halving'' process.
Put $n_i := |X_i|$. On an even split we have $n_i = n/2^{i}$ (recall that we assume $n$ is an integer power of $2$), nevertheless, with
our coloring $\chi$ producing the discrepancy bound in Theorem~\ref{thm:disc_bound}, $n_i$ may be slightly larger. We bound its size
as follows.

We first can assume, w.l.o.g., $X = X_0$ is part of $\S$, and thus, at each iteration $i$, $X_{i-1}$ is part of the collection $\S$ projected
onto $X_{i-1}$, $i \ge 1$.
Applying Theorem~\ref{thm:disc_bound} at iteration $i$ we obtain
$$
\left| |S \cap X_i| - |S \cap X_i'| \right| \le K \cdot |S \cap X_{i-1}|^{1/2 - 1/(2d)} \log^{3/2 + 1/(2d)}{|X_{i-1}|} ,
$$
for an appropriate constant $K > 0$.
Letting $S = X_{i-1}$, we obtain:
$$
\left| |X_i| - (|X_{i-1}| - |X_i|) \right| \le K \cdot |X_{i-1}|^{1/2 - 1/(2d)} \log^{3/2 + 1/(2d)}{|X_{i-1}|}  ,
$$
from which we obtain
$$
|X_i| \le \frac{|X_{i-1}|}{2} \left(1 + \frac{K\log^{3/2 + 1/(2d)}{|X_{i-1}|}}{|X_{i-1}|^{1/2 + 1/(2d)}} \right).
$$
We thus write the bound on $|X_i|$ as
\begin{equation}
  \label{eq:X_i}
  |X_i| = \frac{|X_{i-1}|}{2} \left(1 + \delta_{i-1} \right) ,
\end{equation}
where $0 \le \delta_{i-1} \le \frac{K \log^{3/2 + 1/(2d)}{|X_{i-1}|}}{|X_{i-1}|^{1/2 + 1/(2d)}}$.
Applying~(\ref{eq:X_i}) recursively on $i$, we obtain:
$$
|X_i| = \frac{|X_0|}{2^i} \prod_{j=0}^{i-1} (1 + \delta_{j}) \le \frac{|X_0|}{2^i} \exp{\left\{ \sum_{j=0}^{i-1} \delta_j \right\} }.
$$
From the bound on $\delta_j$ we have (recall that $n_i = |X_i|$):
$$
n_i \le \frac{n}{2^i} \exp{ \left\{ K \log^{3/2 + 1/(2d)}{n} \sum_{j=0}^{i-1} \left(\frac{2^{i}}{n}\right)^{1/2 + 1/(2d)} \right \} } ,
$$
and the exponent in the latter term is $O(1)$ when 
$i \le i^{*} = \log{n} - \frac{3/2 + 1/(2d)}{1/2 + 1/(2d)} \cdot \log\log{n} - \log^{\frac{1}{1/2 + 1/(2d)}}{K}$.
We thus stop the process at iteration $i^{*}$ (or earlier---see below), from which we obtain a lower bound of 
$\Omega(\log^{\frac{3/2 + 1/(2d)}{1/2 + 1/(2d)}}{n})$ on $n_{i-1}$ (with a constant of proportionality depending on $K$).
We next proceed with the presentation of the ``halving'' process in order to obtain a relative error of $\alpha$, and then integrate 
the resulting bound with the one above.

Our next goal is to bound, for each $S \in \S$, the difference 
$\left| \frac{|S \cap X_{i-1}|}{|X_{i-1}|} - \frac{|S \cap X_{i}|}{|X_{i}|} \right|$, 
which we also denote by $|\overline{X_{i-1}}(S) - \overline{X_{i}}(S)|$.
Since $|X_i| = \frac{|X_{i-1}|(1 + \delta_{i-1})}{2}$ and $X_{i-1} = X_i \sqcup X_i'$, this difference is
$$
\left|\frac{|S \cap X_i| + |S \cap X_i'|}{|X_{i-1}|} - \frac{2|S \cap X_i|}{|X_{i-1}|(1+\delta_{i-1})} \right| =
\left| \frac{|S \cap X_{i}'|}{|X_{i-1}|} - \frac{|S \cap X_i|(1-\delta_{i-1})}{|X_{i-1}|(1+\delta_{i-1})} \right| .
$$
By adding and subtracting $\frac{|S \cap X_i|}{|X_{i-1}|}$, we obtain that the latter term is:
$$
\left| \frac{|S \cap X_{i}'| - |S \cap X_i|}{|X_{i-1}|} + \frac{2\delta_{i-1}}{1 + \delta_{i-1}} \cdot \frac{|S \cap X_i|}{|X_{i-1}|} \right| 
$$
$$
= 
\left| \frac{|S \cap X_{i}'| - |S \cap X_i|}{|X_{i-1}|} + \delta_{i-1} \cdot \frac{|S \cap X_i|}{|X_i|} \right| \le
\left| \frac{|S \cap X_{i}'| - |S \cap X_i|}{|X_{i-1}|} \right| +  \delta_{i-1} \cdot \overline{X_{i}}(S) .
$$
since $|X_{i-1}| = 2\frac{|X_i|}{1 + \delta_{i-1}}$.
By our discrepancy bound in Theorem~\ref{thm:disc_bound} we have 
$$
|S \cap X_{i}'| - |S \cap X_i| \le  K \cdot |S \cap X_{i-1}|^{1/2 - 1/(2d)} \log^{3/2 + 1/2d}{|X_{i-1}|} ,
$$
for an appropriate constant $K > 0$. We thus obtain:
$$
|\overline{X_{i-1}}(S) - \overline{X_{i}}(S)| \le \frac{K \cdot |S \cap X_{i-1}|^{1/2 - 1/(2d)} \log^{3/2 + 1/2d}{|X_{i-1}|}}{|X_{i-1}|} + 
 \delta_{i-1} \cdot \overline{X_{i}}(S) .
$$
The latter term in the above sum is obviously bounded by $\delta_{i-1} \cdot (\overline{X_{i}}(S) + \overline{X_{i-1}}(S) + \nu)$.
Concerning the first term, we write it as 
$$
\frac{K {\overline{X_{i-1}}(S)}^{1/2 - 1/(2d)} }{|X_{i-1}|^{1/2 + 1/(2d)}} \log^{3/2 + 1/2d}{|X_{i-1}|} .
$$
and use the observation that $x^{p} < (x+y)/y^{1-p}$, for $x \ge 0$, $y > 0$ and $0 < p < 1$ (stated in~\cite{HS-11})
in order to bound the latter term by
$$
\frac{K \cdot \log^{3/2 + 1/2d}{n_{i-1}} }{n_{i-1}^{1/2 + 1/(2d)}} \cdot \frac{\overline{X_{i-1}}(S) + \nu}{\nu^{1/2 + 1/(2d)}} 
\le
\frac{K \cdot \log^{3/2 + 1/2d}{n_{i-1}} }{n_{i-1}^{1/2 + 1/(2d)}} \cdot \frac{\overline{X_{i-1}}(S) + \overline{X_{i-1}}(S) + \nu}{\nu^{1/2 + 1/(2d)}} .
$$

This implies that 
$$
d_{\nu}(\overline{X_{i-1}}(S), \overline{X_i}(S)) = 
\frac{|\overline{X_i}(S) - \overline{X_{i-1}}(S)|}{\overline{X_i}(S) + \overline{X_{i-1}}(S) + \nu} 
$$
$$
\le
\frac{K \cdot \log^{3/2 + 1/2d}{n_{i-1}}}{(\nu n_{i-1})^{1/2 + 1/(2d)}} + \delta_{i-1} \le
\frac{K \cdot \log^{3/2 + 1/2d}{n_{i-1}}}{(n_{i-1})^{1/2 + 1/(2d)}} \left( \frac{1}{\nu^{1/2 + 1/(2d)}} + 1 \right)  ,
$$
due to the bound on $\delta_{i-1}$.

Since $d_{nu}(\cdot, \cdot)$ satisfies the triangle inequality (see~\cite{LLS-01}), we obtain:
$$
d_{\nu}(\overline{X_0}(S), \overline{X_i}(S)) \le \sum_{j=1}^{i} d_{\nu}(\overline{X_{j-1}}(S), \overline{X_j}(S)) 
$$
$$
\le
K \cdot \left( \frac{1}{\nu^{1/2 + 1/(2d)}} + 1 \right)  \sum_{j=1}^{i} \frac{ \log^{3/2 + 1/(2d)}{n_{j-1}} }{ n_{j-1}^{1/2 + 1/(2d)}} 
= O\left( \frac{\log^{3/2 + 1/(2d)}{n_{i-1}}}{(\nu n_{i-1})^{1/2 + 1/(2d)}} \right) \le \alpha ,
$$
for $n_{i-1} = \Omega\left(\frac{\log^{(3 + 1/d)/(1+1/d)}{\frac{1}{\nu \alpha}} }{\nu \alpha^{2/(1+1/d)}} \right)$.
We thus stop at that iteration $i$ for which the set $X_{i-1}$ is the smallest that still satisfies 
this lower bound.

Combining these considerations with the fact that we stop the process no later than iteration $i^{*}$, we obtain
$$
n_{i-1} = \max \left \{ \Omega\left(\log^{\frac{3/2 + 1/(2d)}{1/2 + 1/(2d)}} {n}\right), 
\Omega\left(\frac{\log^{\frac{3 + 1/d}{1+1/d}}{\frac{1}{\nu \alpha}} }{\nu \alpha^{2/(1+1/d)}} \right) \right \} .
$$

\end{document}